\documentclass[letterpaper, 10 pt, conference]{ieeeconf}  % Comment this line out
\IEEEoverridecommandlockouts                              % This command is only
\usepackage{amsmath} % assumes amsmath package installed
\usepackage{amssymb}  % assumes amsmath package installed
\usepackage{graphicx}
\usepackage[dvipsnames]{xcolor}

\usepackage{amsthm}
\usepackage{amssymb} %needed for \mathbb{}?
\usepackage{amsmath}
\usepackage{soul}

\usepackage{mathrsfs} % this is for mathscr

\usepackage{comment}

\usepackage{algorithm}
    \usepackage{algpseudocode}

\usepackage{cite}

\newtheorem{theorem}{Theorem}

\newtheorem{definition}{Definition}

\newtheorem{prop}{Proposition}

\newcommand{\blue}{\mathsf{b}}
\newcommand{\red}{\mathsf{r}}

\newcommand{\naturalSet}[1]{[#1]}
\newcommand{\actionsetblue}{\mathcal{A}^\blue{}}
\newcommand{\actionsetred}{\mathcal{A}^\red{}}

\newcommand{\extstrategy}{\gamma_{\text{ext}}}

\newcommand{\nodeset}{\mathcal{V}}
\newcommand{\edgeset}{\mathcal{E}}
\newcommand{\graph}{G}

\newcommand{\graphset}{\mathcal{G}}

\newcommand{\prior}{\rho}

\newcommand{\defaultblue}{\tilde{\gamma}^{\blue{}}_t}

\newcommand{\blueneighbors}[1]{{\mathcal{N}_{#1}}}

\newcommand{\redGraph}{\mathcal{G}^{\red}}
\newcommand{\redNodes}{{\mathcal{V}^{\red}}}
\newcommand{\redEdges}{\mathcal{E}^{\red}}
\newcommand{\redneighbors}[1]{{\mathcal{N}^{\red{}}_{#1}}}

\newcommand{\redtype}{\theta^\red}
\newcommand{\redtypeset}{{\Theta}^\red{}}

\newcommand{\goalset}{\mathcal{F}}
\newcommand{\bluetype}{{\theta^\blue}}
\newcommand{\bluetypeset}{{{\Theta}^\blue}}

\newcommand{\prob}{\mathbb{P}}

\newcommand{\restrictedpolicy}{\hat{\gamma}}
\newcommand{\restrictedactions}{\hat{\mathcal{A}}}
\newcommand{\restrictedpolicyset}{\hat{\Gamma}}
\newcommand{\iter}{\tau}
\newcommand{\bestresponse}{\mathbb{BR}}

\newcommand{\timebradds}{\tilde{T}}
\newcommand{\maxdistance}{{\overline{d}_\text{max}}}

\title{\LARGE 
\bf Deception and Counter Deception\\ in Adversarial Graph Traversal Game 
}

\author{Violetta Rostobaya, James Berneburg, and Daigo Shishika% idk what author order we're actually doing 
\thanks{$^1$Violetta Rostobaya, James Berneburg and Daigo Shishika are with College of Engineering and Computing at George Mason University, Emails: {\tt\footnotesize $\{$vrostoba,jbernebu,dshishik$\}$@gmu.edu}}%
}

\begin{document}

\maketitle
\thispagestyle{empty}
\pagestyle{empty}

\noindent
\begin{abstract}
We study deception in adversarial graph traversal, where a mobile agent seeks to reach a goal with minimum cost while an adversary alters edge costs to increase the total traversal cost. 
Unlike prior works that assume fixed observer–deceiver roles, we model this problem 
% as a sequential-move zero-sum stochastic shortest path game 
with two-sided incomplete information in which both players possess private information and update beliefs from observed actions. 
To solve the resulting indefinite-horizon game, we develop an adaptation of the Extensive-Form Double Oracle (XDO) algorithm. 
While the standard XDO algorithm is designed for finite games, the proposed adaptation ensures bounded computation despite endogenous game termination. 
We show that the proposed algorithm terminates in finite time and returns an $\epsilon$-Nash equilibrium. 
Finally, we use Value of Information to characterize the deceptive and counter-deceptive behaviors that emerge from equilibrium strategies.
\end{abstract}

\section{Introduction}
\noindent
In dynamic games with incomplete information, actions can be interpreted as signals that influence opponents’ beliefs about hidden information. 
When players update beliefs from observed game trajectories, actions affect both instantaneous cost and future responses. 
As a result, optimal strategies must balance short-term performance against long-term informational advantage.
This coupling between physical dynamics and belief dynamics distinguishes such problems from classical optimal control and planning.

A representative setting arises in adversarial graph traversal~\cite{berneburg2025multi}, where a mobile agents seek to reach a goal while minimizing traversal cost in an environment subject to adversarial modification. 
As illustrated in Fig.~\ref{fig:scenario}, 
% a team of robots navigates a graph whose
the edge costs may be altered, for example, by obstructing routes or degrading infrastructure. 
%In this setting, both players’ actions serve a dual role: they shape the physical evolution of the system while simultaneously revealing information about private objectives or capabilities. 
However, there may be uncertainties on the adversary's ability to attack the edges.
In such settings, actions affect not only the physical environment but also the information available to the opponent: adversarial actions may reveal capabilities, while the robots’ responses may reveal their final destination. Effective game play therefore requires planning under uncertainty.

Existing approaches do not fully capture this dual role of actions as both physical moves and sources of information. 
Classical graph-based planning methods, including shortest-path and stochastic optimization formulations, typically treat environmental changes as exogenous and do not account for strategic adversaries~\cite{ding2008finding,cai1997time,yuan2019constrained,pereira2013risk,primatesta2019risk,feyzabadi2014risk}. 
Goal recognition and deceptive path planning (DPP) frameworks study how an agent can manipulate an observer’s beliefs through its motion~\cite{dragan2015deceptive,masters2017deceptive,Ornik2018DeceptionIO,karabag2021deception}, however these works often assume predefined observer's model instead of treating observer as a decision maker that is capable of choosing their own belief model.
Game-theoretic models of navigation and attack-defense scenarios on graphs address adversarial decision making, but often assume complete information~\cite{berneburg2025multi,patek1999stochastic} or impose structural restrictions such as acyclic environments or limited feedback~\cite{durkota2015optimal,nguyen2017multi}. 
Game-theoretic DPP formulations consider a mobile deceiver facing a decision-making observer in graph environments~\cite{rostobaya2023deception, rostobaya2025deceptive,guan2025strategic}. These formulations, however, assume fixed roles, with one player observing and the other deceiving. 
In realistic adversarial settings, both players may possess private information and must balance deceptive and counter-deceptive behavior.

\begin{figure}
    \centering
    \includegraphics[width=\linewidth]{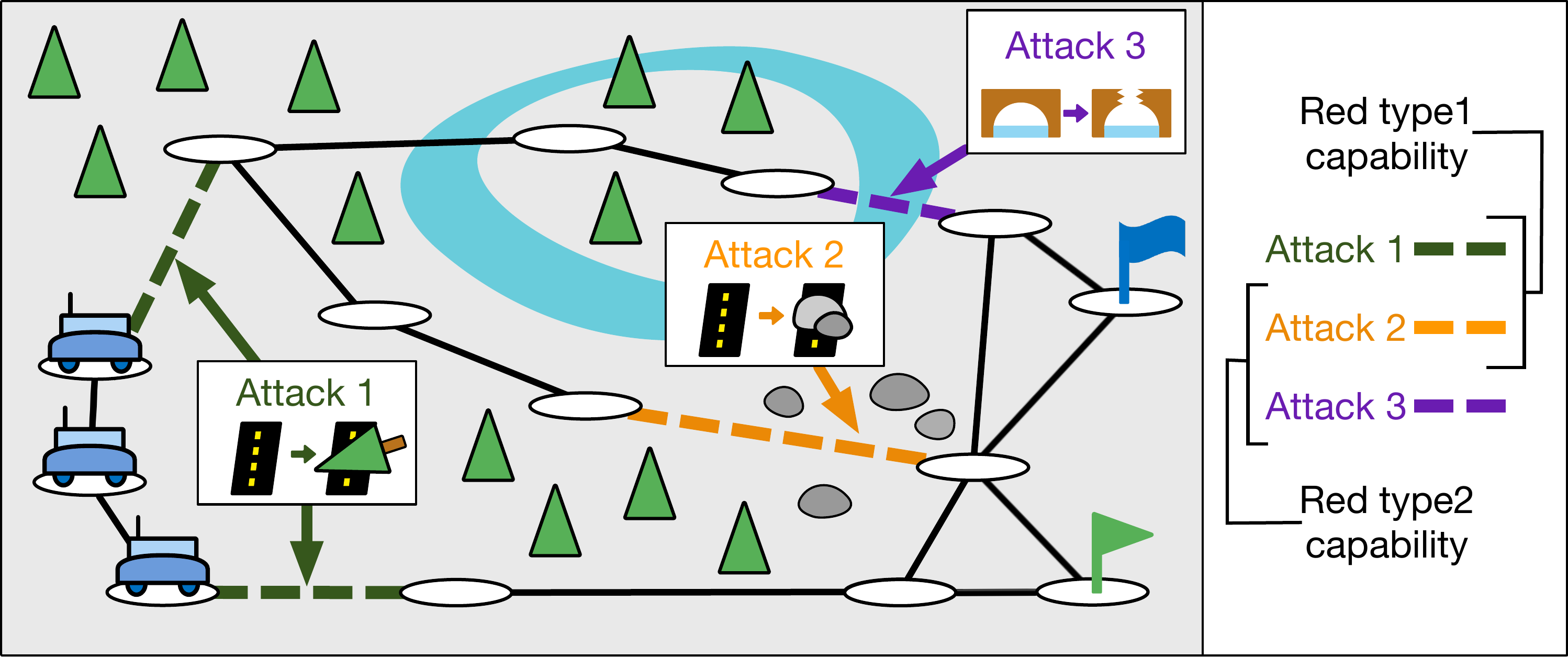}
    \caption{Illustration of incomplete-information adversarial graph traversal game. Blue robots must reach the assigned goal (one of two flags), while the Red player attacks the edges (e.g., by blocking roads) to maximize the traversal cost. Blue's goal and Red's  capability are private information, so each player must infer these based on observations of the opponents actions.}
    \label{fig:scenario}
\end{figure}

To address these limitations, we formulate an \emph{Adversarial Graph Traversal} (AGT) as a two-sided incomplete-information game. The game is modeled as a zero-sum stochastic shortest path game in which a mobile agent (Blue) seeks to reach a goal while an adversary (Red) dynamically modifies edge costs. 
Each player has private information, and both players update beliefs based on observed actions. 
Unlike prior DPP game formulations~\cite{rostobaya2023deception, rostobaya2025deceptive}, in our game both players influence the state evolution, and both must reason about how their actions affect the opponent’s beliefs. 
This setting requires both players to balance deception and counter-deception.

Solving incomplete-information games presents a computational challenge. 
Standard solution methods can not handle combination of two-sided incomplete information and an indefinite time horizon. 
Partially observable stochastic games (POSGs) provide a general modeling framework but are computationally intractable in large or long-horizon settings~\cite{horak2023solving,hansen2004dynamic}. 
Extensive-Form Double Oracle (XDO) methods can handle two-sided incomplete information, but require a finite extensive-form representation, which is not directly applicable to stochastic shortest path games with endogenous termination~\cite{mcaleer2021xdo, Boansk2014AnED}.
To overcome this challenge, we develop a modification of the XDO algorithm from~\cite{mcaleer2021xdo} tailored to indefinite-horizon stochastic shortest path games. 
The key idea is to introduce carefully designed default strategies that ensure finite termination and enable the construction of a finite restricted game at each iteration. 
This also allows best responses to be computed without propagating the decision trees indefinitely, while preserving the structure of the original problem. 
The resulting modified algorithm is shown to terminate in finite time and compute an $\epsilon$-Nash equilibrium.

The contributions of this paper are: (i) Formulation of a new adversarial graph traversal problem as a two-sided incomplete-information game, in which both players control the game and possess private information;
% state and update beliefs about each other types from observed histories;
(ii) Extension of the XDO algorithm for our AGT game with indefinite horizon by introducing default strategies that ensure finiteness of the restricted games and bounded computation of best-responses;
(iii) Proof on theoretical guarantees of finding $\epsilon$-NE; and 
(iv) Characterization of emergent deceptive and counter-deceptive behaviors through the analysis on game values and Value of Information.

\section{Problem Formulation}
\noindent
We consider a two-player stochastic game between a Blue player and a Red player. 
The Blue player controls a team of embodied agents navigating a graph environment and seeks to reach a goal configuration while minimizing traversal cost. 
The Red player can modify the graph’s edge costs over time in order to increase the cost incurred by the Blue player.
% Our formulation builds on the game introduced in~\cite{berneburg2025multi}, but differs in two key respects. 
% First, we adopt a sequential-move setting in which only one player acts at each time step, as opposed to the simultaneous-move formulation in~\cite{berneburg2025multi}. 
% Second, rather than assuming complete information, we consider an setting in which each player has private information, represented as a variable called \emph{type}, and each player must infer the opponent’s type through observations and belief updates.
We first describe the complete-information version of the game and then introduce asymmetric-information extensions.
% We consider a two-player stochastic game between a Blue player and a Red player. 
% The Blue player controls a team of embodied agents moving in a graph environment and seeks to reach a goal configuration with minimal traversal cost. 
% The Red player can alter the graph edge costs over time in order to increase the traversal cost incurred by the Blue player.
% The complete information version of the game was solved in~\cite{berneburg2025multi}. Our formulation differs from~\cite{berneburg2025multi}  primarily in two ways: i) ~\cite{berneburg2025multi} considered a simultaneous move setting, meanwhile  we consider a sequential move setting, where only one player acts each time step, and ii)~\cite{berneburg2025multi} considers complete knowledge of game state for both players, that is intentions and capabilities of all players is common knowledge, meanwhile in our game we consider an setting, where each player has private information that determines their goal or capability, and players need to infer their opponent's private information using observations and beliefs.
%We first describe the complete-information version of the game and then introduce two asymmetric-information extensions.

\paragraph*{Notation}
 We use $[n]\triangleq\{1,2,\dots,n\}$ for $n\in \mathbb{Z}_{>0}$. For a finite set $A$, let $\Delta(\mathcal{A}) \subset [0,1]^{|\mathcal{A}|}$ denote the probability simplex over $\mathcal{A}$.
We refer to a player using superscript $i \in \{\red,\blue\}$ and its opponent using index $-i\in \{\red,\blue\}\setminus\{i\}$, where $\blue$ is for the Blue player and $\red$ is for the Red player.

\subsection{Complete-information Game \cite{berneburg2025multi}}\label{subsection: complete info game} 
\noindent 
The Blue player controls an agent that moves on a weighted directed graph $\graph_t$, referred to as the terrain graph. 
Let $\graph_t=(\nodeset,\edgeset,w_t)$, where $\nodeset=[N]$ is a set of nodes for $N>0$, $\edgeset \subseteq \nodeset \times \nodeset$ is a set of edges, and $w_t:\edgeset\to \mathbb{R}_{>0}$ is a time-varying weight function that assigns a positive traversal cost to each edge. 
In particular, $w_t(p,p')>0$ denotes the cost of edge $(p,p')\in\edgeset$ at time $t$.
The terrain graph belongs to a known set of graphs $\graph_t\in \graphset\triangleq\{\graph^1,\graph^2,\dots,\graph^K\}$,
with corresponding weight functions $w_t\in \mathcal{W}\triangleq\{ w^1,w^2,\dots,w^K\}$, so that $\graph^k=(\nodeset,\edgeset,w^k)$ is a static graph for $k\in[K]$, where $K>0$ is the number of graphs.

%There are $M\in\mathbb{Z}_{>0}$ Blue team robots, and each robot $m$ at time $t$ occupies position $p_t^m\in \nodeset$. 
%We collect all robot positions into the vector $P_t=[p_t^1,p_t^2,\dots,p_t^M]^\top\in \nodeset^M$.
%The set of valid actions for the Blue player is $\actionsetblue=\blueneighbors{p}$, where $s$ is the game state defined below and $\blueneighbors{p}$ is the set of out-neighbors of node $p$ on the terrain graph. 
For a state $S_t=(p_t,\graph_t)$, where $p_t$ is the current Blue position, the set of valid actions for the Blue player is
$\actionsetblue(S_t)=\blueneighbors{p_t}$,
where $\blueneighbors{p_t}\triangleq \{p'\in\nodeset \mid (p_t,p')\in\edgeset\}$ is the set of out-neighbors of node $p_t$ on the terrain graph $G_t\in \mathcal{G}$.

The Red player's action is to select the graph at the next time step. 
Specifically, Red's available actions at time $t$ are specified by Red's \textit{action graph} $\redGraph=(\redNodes,\redEdges)$, where $\redNodes=[K]$ are the nodes of this graph and node $k$ corresponds to terrain graph $\graph^k\in\graphset$, while $\redEdges\subseteq\redNodes\times\redNodes$ is the set of edges. 
The action graph is directed, unweighted, and every node has a self-loop. 
The Red player's set of valid actions is $\actionsetred(p,\graph^k)=\redneighbors{k}$, 
%action is $
%a_t^\red\in \redneighbors{k}$,
where $\redneighbors{k}$ is the set of out-neighbors of node $k$ in $\redGraph$, and $\graph^k=\graph_t$. 
An example of action graph can be found in Fig.~\ref{fig: action graph}.
Since Red's available actions are determined by the current terrain graph, AGT models settings where multiple steps are required to change the environment or where changes on edge cost are irreversible. 
%Red's set of valid actions is $\actionsetred(P,\graph^k)=\redneighbors{k}$.

We define the state as $
S_t=(p_t,\graph_t)\in \mathcal{S}\triangleq \nodeset\times \graphset$.
We consider sequential actions, with the Blue player taking actions at even time steps, and the Red player taking actions at odd time steps.
The state dynamics for $t\geq0$ evolve according to
\begin{subequations}
\begin{align}
    p_{t+1}=
        a^\blue_t\in \actionsetblue(S_t),\; \graph_{t+1}=
    G_t \quad &\text{if}\;  t \; \text{is even},\\
        p_{t+1}=p_t,\;  G_{t+1}
    =\graph^{a^\red_t},  \; a^\red_t\in \actionsetred(S_t), \quad &\text{if} \;  t \; \text{is odd}.
    \end{align}
    % \begin{align}
    % \graph_{t+1}&=\begin{cases} 
    % G_t,\; &\text{if} \;  t \; \text{is even},\\
    %     \graph^{a^\red_t}, \; a^\red_t\in \actionsetred(S_t), &\text{if} \;  t\; \text{is odd}.
    % \end{cases}
    %  \end{align}
\end{subequations}

% The game terminates when Blue reaches a state in its goal set $ \goalset\subseteq \nodeset$.
% We assume there exists a state in $\goalset$ reachable from any state $s\in\mathcal{S}$. 
% % Therefore 
% The terminal time is
% $T=\min_{t\in \mathbb{Z}_{\geq 0}}\{t\mid p_t\in \goalset\}.$ 
% At each even time step, if Blue chooses action $a_t^\blue=p_{t+1}$, it incurs the stage cost
% \begin{equation}
%     C(s,a^\blue_t)=w_t(p_t,p_{t+1}), 
% \end{equation}
% and the cost is zero at odd time steps. 
The game terminates when Blue reaches a state in its goal set $\goalset\subseteq\nodeset$. 
We define the terminal time as $
T=\min_{t\in\mathbb{Z}_{\geq 0}}\{t\mid p_t\in\goalset\}$.
We assume that at least one state in $\goalset$ is reachable from any state $S_t\in\mathcal{S}$, so that termination is feasible from every initial state.
At each even time step, if Blue chooses action $a_t^\blue=p_{t+1}$, it incurs the stage cost
\begin{equation}
    C(S_t,a^\blue_t)=w_t(p_t,p_{t+1}), 
\end{equation}
and the cost is zero at odd time steps.

The objective of the Blue (resp. Red) player is to minimize (resp. maximize) the total expected cost of traversal from $S_0$ to $p\in \mathcal{F}$. The solution to the complete information game can be found using Shapley value iteration as explained in~\cite{berneburg2025multi}.
\subsection{Incomplete-information Game}\label{Subsection: incomplete-information game} 
\noindent
Now we consider the incomplete-information extension of the game. Each player has private information modeled using a variable called \emph{type}, and it forms a belief on the opponent's type. 
The Blue player possesses private information about its final goal configuration while the Red player possesses private information about its action graph.
%Both players have a variable called \emph{type} and 
Each player's type is denoted as $\theta^i$, $i \in \{\red,\blue\}$. Each type $\theta^i$ is randomly selected by Nature from a finite type set $\Theta^i$ according to commonly known prior distribution $\prior^i\in \Delta(\Theta^i)$. 
% The type specifies private information of the player $i$.

The  Blue player's type $\bluetype$ determines the Blue player's goal set $\goalset(\bluetype)\subseteq{\nodeset}$. 
Accordingly, the terminal time becomes
\begin{equation}\label{eq: termination}
T(\bluetype)=\min_{t\in \mathbb{Z}_{\geq 0}}\{t\mid p_t\in \goalset(\bluetype)\}.
\end{equation}
The Red player's type $\redtype$ determines Red's action graph
$\redGraph(\redtype)=(\redNodes,\redEdges(\redtype))$.
For each $\redtype\in\redtypeset$, we assume every node $k\in \redNodes$ has a self-loop.

Each player has perfect recall of state history and actions. 
The history of states at time $t$ is denoted $h_t$, with initial history $h_0=(p_0,\graph_0)$. The history evolves according to 
\begin{equation}
    h_{t+1}=(h_t,a^{i}_t),
\end{equation}
where $i=\blue$ if $t$ is even, and $i=\red$ if $t$ is odd.
The set of reachable histories at time $t$ is denoted by $\mathcal{H}_t$.
Each player's info set at time $t$ is denoted as $I^i_t=(\theta^i,h_t)\in \mathcal{I}^i_t= \Theta^i \times \mathcal{H}_t $, where $\mathcal{I}^i_t$ is player $i$'s set of info sets at time $t$.
% The Blue player's info set is $I^\blue_t=(\theta^\blue,h_t)\in \mathcal{I}^\blue_t= \Theta^\blue \times \mathcal{H}_t $, where $\mathcal{I}^\blue_t$ is Blue's set of info sets at time $t$. The Red player observes history $h_t$, its own type $\theta^\red$ and Blue's action $a^\blue_t$ at time $t$, i.e., Red's info set is $I^\red_t=(\theta^\red,h_t,a^\blue_t)\in \mathcal{I}^\red_t= \Theta^\red \times \mathcal{H}_t \times \actionsetblue(S_t) $.
We denote the set of admissible strategies for player $i$ as $\Gamma^i$, and let $\gamma^i$ be an element of this set. At time $t$, player $i\in \{ \red,\blue\}$ at info set $I^i_t$ selects an action from their action set $\mathcal{A}^i(S_t)$ according to their strategy $\gamma^i$. Specifically $\gamma^i$ is behavioral strategy with mapping $\gamma^i:\mathcal{I}^i\to\Delta(\mathcal{A}^i(S_t))$, where $\mathcal{I}^i:=\bigcup^\infty_{t=0} \mathcal{I}_t^i$.
Here strategy $\gamma^i(a^i_t|I^i_t=(\theta^i,h_t))$ denotes probability player $i$ type $\theta^i$ takes action~$a^i_t$ at history~$h_t$. 
% Here Blue's behavioral strategy $\bluestrategy^i(a_t^i\mid I^i_t=(\theta^i,h_t))$ denotes the probability player~$i$ type $\theta^i$ selects action $a_t^i$ given history $h_t$. 

Each player $i$ forms a belief about the the opponent's type, $\theta^{-i}$. 
Let $\mu^{i}(h_t)\in \Delta (\Theta^{-i}) $ denote the belief vector of player $i$, where $\mu^{i}(\theta^{-i}|h_t)=[\mu^i(h_t)]_{\theta^{-i}}\in [0,1]$ is the probability that player $-i$ is type $\theta^{-i}$  conditioned on the observed history $h_t$.
The initial belief is given by the commonly known prior $
\mu^{i}(h_0)=\prior^{-i}$.
For $t\ge1$, the belief $\mu^{i}(\theta^{-i} \mid h_t)$ is obtained recursively using Bayes' rule:
% To account for uncertainty about the Red player's type, the Blue player maintains a belief over $\redtypeset$. 
%  The Blue's belief is updated using Bayes' rule:
\begin{subequations} \label{eq: Bayes rule}
\begin{align} 
\mu^{i}(\theta^{-i} \mid h_{t+1})
&=
\frac{
\mu^{i}(\theta^{-i} \mid h_{t})\,
{\gamma^{-i}}(a_{t}^{-i} \mid \theta^{-i}, h_{t})
}{
\underset{\phi\in \Theta^{-i}}{\sum}
\mu^{i}(\phi \mid h_{t})\,
{\gamma^{-i}}(a_{t}^{-i} \mid \phi, h_{t})
}, 
\\
%\text{for}\; i&=\begin{cases}
%    \red{},\; \text{if} \;  t \; \text{is even},\\
%    \blue{}, \; \text{if} \;   t \; \text{is odd}, 
%\end{cases}\notag \\
\mu^{-i}(\theta^{i} \mid h_{t+1})&=\mu^{-i}(\theta^{i} \mid h_{t}), 
%\text{for}\;i&=\begin{cases}
%    \red{},\; \text{if} \;  t \; \text{is odd},\\
%    \blue{}, \; \text{if} \;  t \; \text{is even}, \notag
%\end{cases}
%\text{for}\; i&=\begin{cases}
%    \red{},\; \text{if} \;  t \; \text{is even},\\
%    \blue{}, \; \text{if} \;   t \; \text{is odd}, 
%\end{cases}\notag
\end{align}
\end{subequations}
where $i = \red$ if $t$ is even and $i = \blue$ if $t$ is odd. 
% where $h_{t+1}=(h_t,a^\blue_t,a^\red_t)$.
% Suppose the current history is $h_{t}$ and the Blue player's action $a_{t}^{\blue}$ is observed by the Red player. 
% Then the Red's updated belief is
% \begin{equation}\label{eq. Bayes rule, belief of Red}
% \mu^{\red}(\bluetype \mid h_t,a^\blue_t)
% =
% \frac{
% \mu^{\red}(\bluetype \mid h_{t})\,
% {\bluestrategy}_t^\blue(a_{t}^{\blue} \mid \bluetype,h_{t})
% }{
% \sum_{\phi\in\bluetypeset}
% \mu^{\red}(\phi \mid h_{t})\,
% {\bluestrategy}_t^\blue(a_{t}^{\blue} \mid \phi,h_{t})
% }.
% \end{equation} 

Both players aim to optimize for the expected total cost of the Blue player, where expectation is taken with resect to the player's strategies and the prior distributions $\prior=(\prior^{\red},\prior^\blue)$. The Blue player (resp. Red Player) aims to minimize (resp. maximize) the expected total cost of the Blue player:
\begin{equation}\label{eq:gameCost}
    J(\gamma^{\red},\gamma^{\blue};S_0,\prior)
    =
    \mathbb{E}
    \left[\sum_{t=0}^{T(\theta^\blue)} C(S_t,a_t^\blue)  \Bigm| \gamma^\red,\gamma^{\blue},\prior \right].
\end{equation}
We define the expected utilities of the Red player and the Blue player to be $U^\red(\gamma^{\red},\gamma^{\blue};S_0,\prior)=J(\gamma^{\red},\gamma^{\blue};S_0,\prior)$ and $U^\blue(\gamma^{\red},\gamma^{\blue};S_0,\prior)=-U^\red(\gamma^{\red},\gamma^{\blue};S_0,\prior)$, respectively. 
%We define the expected utility of the Red player to be $U^\red(\gamma^{\red},\gamma^{\blue};S_0,\prior)=J(\gamma^{\red},\gamma^{\blue};S_0,\prior)$, while the expected utility of the Blue player is $U^\blue(\gamma^{\red},\gamma^{\blue};S_0,\prior)=-U^\red(\gamma^{\red},\gamma^{\blue};S_0,\prior)$. 
Now, for initial state $S_0$ and distribution $\prior$, define player $i$'s best response to its opponent's strategy $\gamma^{-i}$ as 
\begin{equation}
\bestresponse^i(\gamma^{-i})\triangleq \text{arg}\max_{\gamma^{i}\in \Gamma^i}U^i(\gamma^{i},\gamma^{-i};S_0,\prior).
\end{equation}

Note that as a special case of this two-sided incomplete-information game, we can consider asymmetric-information cases in which only one player has private information. 
In this case, player $i$ has multiple possible types, i.e., $|\Theta^i|>1$, while the opponent has a single type, i.e., $|\Theta^{-i}|=1$.
Similarly, the complete information game setting described in Section~\ref{subsection: complete info game} is a special case
where both players have single types, i.e., $|\Theta^i|=1$, $\forall i\in\{ \red, \blue\}$.

\begin{definition}[$\epsilon$-Nash equilibrium]
    Strategy profile $\gamma^*=(\gamma^{\red*},\gamma^{\blue*})$ constitutes $\epsilon$-Nash Equilibrium if for each player $i\in \{ \red,\blue \}$ it satisfies:
    \begin{equation}
      U^i(\bestresponse^{i}(\gamma^{-i*}), \gamma^{-i*};S_0,\prior)\leq U^i(\gamma^*;S_0,\prior)+\epsilon,
        \end{equation}
        and the corresponding value of the game is defined as $V(S_0,\prior)\triangleq J(\gamma^*;S_0,\prior)$.
\end{definition}

 \subsection{Multi-agent scenario}
 \noindent
The formulation above extends to the case where the Blue player represents a decision maker for a team of agents. 
Following \cite{berneburg2025multi}, the multi-agent problem can be equivalently represented as a single Blue agent moving on a joint state terrain graph, whose nodes correspond to joint team configurations and whose edges correspond to feasible simultaneous actions of the team. Each edge weight on the joint graph is the total one-step cost of the corresponding joint action. 
%Under this representation, the formulation remains unchanged in structure: the Red type still determines the action graph, the Blue type still determines the goal set, and beliefs are updated from the observed history in the same way, except that the public state now contains the joint team configuration rather than a single robot position. 
%This representation is useful because it allows the multi-agent game to be analyzed within the same framework as the single-agent case, while still capturing coordinated behaviors among the Blue team agents.
Therefore, the multi-agent problem can be converted into a single-agent problem, so we analyze the single-agent problem for simplicity but without loss of generality. 

\section{Solution Method}
\noindent
The incomplete-information AGT game presents two computational challenges: both players may have private information, and the game has an indefinite terminal time. 
We use Extensive-Form Double Oracle (XDO) as the basis for our solution method because it can solve large extensive-form games by maintaining restricted game and iteratively expanding it by adding new best responses. 
This structure is appropriate for two-sided incomplete-information games, where directly enumerating the full strategy space is generally impractical. 
However, standard XDO assumes a finite extensive-form game, while the AGT game terminates endogenously when Blue reaches its goal. 
To address this issue, we adapt XDO by introducing default strategies that keep the restricted game and best-response computations finite.

We first review the XDO algorithm introduced in~\cite{mcaleer2021xdo}, and then adapt it to accommodate games with indefinite terminal time. 
The completeness of our algorithm is provided in Section~\ref{sec:theoretical}.

% We review the XDO algorithm introduced in \cite{mcaleer2021xdo}, and then adapt it to accommodate games with indefinite terminal time. 
% The completeness of our algorithm is provided in Section~\ref{sec:theoretical}.

\subsection{XDO Algorithm \cite{mcaleer2021xdo}}

%\todo{describe XDO algorithm here }
%\daigo{Don't we need to explain the concept of default strategies in this subsection?}

% XDO is an algorithm for solving \emph{finite} two player zero-sum games in extensive form which can handle two-sided It solves the game by iteratively expanding game tree, using a method such as Counterfactual Regret minimization (CFR), smaller restricted games defined by a limited set of pure strategies, and then adding new strategies to this set as needed. The algorithm is guaranteed to terminate for a finite game and yield $\epsilon$-NE strategies in a number of iterations which is linear in the number of information sets~\cite{mcaleer2021xdo}. More formally, let the set of pure strategies for player $i$ at iteration $\tau$ be $\restrictedpolicyset^i_\tau$, with some finite initial set $\restrictedpolicyset^i_0$. 
% Then for 
% The restricted extensive form game is then defined for the action set for each player $i$ 
% \begin{align}\label{eq:restrictedActionSet}
%     \restrictedactions^i_\iter(I^i) \triangleq \{ a^i \in A^i(S_t) | \exists \restrictedpolicy^i \in \restrictedpolicyset^i_\tau \text{ s. t. } \restrictedpolicy^i(a^i|I^i) = 1\}, 
% \end{align}
% where $S_t$ is the Markov state consistent with information set $I^i_t$. 
\noindent
XDO is an iterative algorithm for solving \emph{finite} two-player zero-sum games in extensive form that can handle two-sided incomplete information. 
%It iteratively expands the game tree. 
At each iteration, it solves smaller restricted games defined over a limited set of pure strategies using a method such as Counterfactual Regret Minimization (CFR), and then augments this set by adding new strategies as needed. For finite games, the algorithm is guaranteed to terminate and produce an $\epsilon$-Nash equilibrium in a number of iterations that is linear in the number of information sets~\cite{mcaleer2021xdo}.

More formally, let the set of pure strategies for player $i$ at iteration $\tau$ be $\restrictedpolicyset^i_\tau$, initialized with some finite set $\restrictedpolicyset^i_0$. The restricted extensive-form game is defined through the action set available to each player $i$ at an information set $I^i$ as
\begin{align}\label{eq:restrictedActionSet}
\restrictedactions^i_\iter(I^i) \triangleq \{ a^i \in \mathcal{A}^i(S_t) \mid \exists \restrictedpolicy^i \in \restrictedpolicyset^i_\tau \text{ s.t. } \restrictedpolicy^i(a^i | I^i) = 1 \}
\end{align}
where $S_t$ is the state consistent with the info state $I^i_t$.
%\james{We may need a more rigorous formulation for extensive form games for this. We at least need to indicate that the state in the above equation is info state and not Markov state. }
%Now, for initial state $S_0$ and distribution $\mathbf{p}=(\mathbf{p}^\red,\mathbf{p}^\blue)$, define player $i$'s best response to opponent's strategy $\gamma^{-i}$ as 
% \[
% \bestresponse^i(\gamma^{-i})\triangleq \text{arg}\max_{\gamma^{i}\in \Gamma^i}U^i(\gamma^{i},\gamma^{-i};S_0,\mu).
% \]
%Note that this is defined for the full game, not the restricted one. 

%With abuse of notation, 
We solve the restricted game using CFR, where the solution is computed over a subset of the full information structure. Specifically, let $\hat{\mathcal{I}}^i$ denote the set of info states included in the restricted game for player $i$, which captures the portion of the game currently being considered at a given iteration.
We can extend strategies for the restricted game to the strategies in the full game by using a pure \emph{default} strategy $\tilde{\gamma}^i\in \Gamma^i$ in any info state not present in $\hat{\mathcal{I}}^i$:
\begin{equation}\label{eq. full game strategy}
    \extstrategy^{i}(I^i)=\begin{cases}
        \hat{\gamma}^{i*}(I^i),\; \text{if} \; I^i \in \hat{\mathcal{I}}^i\\
        \tilde{\gamma}^i(I^i),\; \text{otherwise}.
    \end{cases}
\end{equation}
The default strategy $\tilde{\gamma}^i$ for each player can be chosen arbitrarily, allowing the strategy $\gamma^i_{\text{ext}}$ to be evaluated over the full game and enabling the computation of player $-i$’s best response. 
If the opponent's best response exploits player $i$’s strategy by more than $\epsilon$, then it is added to the strategy population for the next iteration, i.e., $\hat{\Gamma}^i_{\tau+1} = \hat{\Gamma}^i_{\tau} \cup \mathbb{BR}^i(\extstrategy^{-i})$ for each player $i$. 
This process continues until neither player can exploit the opponent’s strategy by more than $\epsilon$, at which point an $\epsilon$-Nash equilibrium is obtained and the algorithm terminates. 
The full XDO procedure is formally described in Algorithm~\ref{XDOalgorithm}, taking $\epsilon_1=\epsilon_2=\epsilon$, from~\cite{mcaleer2021xdo}.
\begin{algorithm} 
	\caption{Solve Unrestricted Game}\label{XDOalgorithm} 
	\begin{algorithmic}[1]
        \State Input: initial population $\restrictedpolicyset_0 = (\restrictedpolicyset_0^\blue,\restrictedpolicyset_0^\red)$, default strategies $\tilde{\gamma}^\blue$ and $\tilde{\gamma}^\red$, $\epsilon_1$ and $\epsilon_2$
        
        \State \textbf{repeat}
        \State \quad Define restricted game for $\restrictedpolicyset_\iter$ using~\eqref{eq:restrictedActionSet}
        \State \quad Get $\epsilon_1$-NE policy  $\restrictedpolicy^*$ of restricted game using CFR
        \State \quad Define $\extstrategy^i$ from~\eqref{eq. full game strategy}
        \State \quad Find $\bestresponse^i(\extstrategy^{-i})$ for each player $i$ 
        \State \quad \textbf{if} $U^i(\bestresponse^i(\extstrategy^{-i*}),\extstrategy^{-i*}) \leq U^i(\extstrategy^{*}) + \epsilon_2$ for both $i$ 
        \State \quad  \textbf{then} Terminate 
        \State \quad $\restrictedpolicyset_{\iter+1}^i = \restrictedpolicyset_\iter^i \cup \bestresponse^i(\extstrategy^{-i*})$ for each player $i$
	\end{algorithmic} 
\end{algorithm}

While this can be used to efficiently solve finite games, the indefinite terminal time of our game means that writing our game as a finite extensive form game is nontrivial.

\subsection{Default strategies}\label{subsection: default strategies}

\noindent
Although the choice of default strategies was not important in the original XDO algorithm of~\cite{mcaleer2021xdo}, arbitrary choices are no longer appropriate in our setting because the game does not have a pre-defined horizon. Specifically, our problem is a stochastic shortest-path game~\cite{patek1999stochastic} in which termination depends on Blue’s strategy, as given in~\eqref{eq: termination}.

This introduces three key challenges for applying Algorithm~1. First, there may exist a Blue player's pure strategy $\gamma^\blue \in \hat{\Gamma}^\blue_\tau$ under which the game never terminates, leading to an infinite restricted game tree that cannot be handled by CFR.
Secondly, even if the Blue's restricted game strategy $\hat{\gamma}^{\blue*}$ terminates the game, but the default strategy does not, then Red's best response would require solving an infinite decision tree.
Thirdly, even if Red’s strategy $\gamma^{\red}$ is fixed, it is defined over infinitely many information sets in $\mathcal{I}^\red$, so computing Blue’s best response would also require solving an infinite decision tree.

To address these issues, we introduce carefully designed default strategies for both players. The default strategy for Blue ensures that the unrestricted game terminates in finite time, thereby guaranteeing that the restricted game tree is also finite and that Red’s best response to $\hat{\gamma}^{\blue*}$ can be computed over a finite tree. 
Similarly, Red’s default strategy is constructed so that the Blue's best response also leads to finite-time termination, and it can be easily precomputed.

\paragraph*{Blue's default strategy $\defaultblue$}
We present a Blue player's default strategy. 
The key property of this strategy is that it guarantees termination of the overall unrestricted game in finite time; such a strategy is called \emph{proper}~\cite{patek1999stochastic}.

We first find maximum edge cost function:
\begin{equation*}
    \bar{w}(i,j)=\max_{w^k\in\mathcal{W}}w^k(i,j), \qquad \forall (i,j)\in \mathcal{E}.
\end{equation*}
We let $\bar{d}(i,j)$ be the shortest path distance on the terrain graph with the maximum edge cost function, i.e., on graph $\bar{G}=(\nodeset,\edgeset,\bar{w})$. 
We let $\tilde{\mathcal{A}}^\blue(S,\theta^\blue)$ be set of actions for Blue player type $\theta^\blue$ at state $S\in \mathcal{S}$ that Blue takes from $p\in \nodeset$ to stay on the shortest path from $p$ to closest node in $\mathcal{F}(\theta^\blue)$ denoted as $p_\text{goal}$, i.e.:
\begin{equation*}\label{eq:defaultstratoptblue}
 \tilde{\mathcal{A}}^\blue(S, \theta^\blue)= \underset{\substack{p_+\in \mathcal{A}^{\blue}(S),\; p_{\text{goal}}\in \mathcal{F}(\theta^{\blue})}} {\arg\!\min}\bar{d}(p,p_+)+\bar{d}(p_+,p_\text{goal}),
\end{equation*}
where $S=(p,G)\in \mathcal{S}$.
Let Blue's pure default strategy $\tilde{\gamma}^\blue$ deterministically select an action from $\tilde{\mathcal{A}}^{\blue}(S_t,\mathcal{\theta^{\blue}})$, i.e. $\exists a^\blue_t\in \tilde{\mathcal{A}}^{\blue}(S_t,\mathcal{\theta^{\blue}})$ such that $\tilde{\gamma}^{\blue}(a^\blue_t|I^\blue_t)\!= \!1$, where $S_t$ and $\theta^\blue$ are consistent with info set $I^\blue_t$.
%See that $\tilde{\gamma}^\blue$ is Markovian, i.e., depends only on current Blue's position.
See that $\tilde{\gamma}^\blue$ depends only on Blue's current position and is independent of Red's actions.

\paragraph*{Red's default strategy $\tilde{\gamma}^\red$} 
Red's pure default strategy $\tilde{\gamma}^\red$ is to simply stay at the current action graph node $k$, where $G_t=G^k$, an action that is available at any state $S=(p,G)\in\mathcal{S}$. Formally, $\tilde{\gamma}^\red(a^{\red}=k|I_t^\red)=1$, where $G_t=G^k$ and $G_t$ is consistent with Red's info set $I_t^\red$.
% \paragraph*{Blue's best response to ${\gamma}^\red$}
% Observe that the Red player's default strategy $\tilde{\gamma}^\red$ keeps the terrain graph $G_t$ stationary outside of the restricted game.
% As a result, it is straightforward to determine whether Blue’s best response exits the restricted game.
% In such cases, we assume that Blue proceeds along a shortest path on $G_t$ to the nearest node in $\mathcal{F}(\theta^\blue)$ for the remainder of the game. 
% Accordingly, we precompute Markovian policies on each $G^k \in \graphset$ that induce shortest paths to $\mathcal{F}(\theta^\blue)$. 
% Once the play exits the restricted region, Blue follows the appropriate Markovian policy for the rest of the game. 

% \begin{rem}[Structure outside the restricted game]
% If play reaches an information set outside the restricted game, then the Red player follows the default strategy $\tilde{\gamma}^\red$, and the terrain graph $G_t$ remains fixed thereafter.
% \end{rem}

\section{Theoretical Results}\label{sec:theoretical}
% some theoretical results would be nice but i don't think we have any yet \\
\noindent
% Results that help pruning the game tree?
Since the existing convergence proof for XDO applies only to finite games, we must rigorously justify why the use of Algorithm~1 with our choice of default strategies are valid for the AGT game which has indefinite-horizon problem.
% \daigo{Clarify what we're trying to do in Subsection A and B.}
We first establish that each iteration of Algorithm~1 does not produce infinite decision trees, and then we show that Algorithm~1 terminates in finite number of iterations.
Finally, we provide some analysis on the Value of Information that helps us characterize the strategies that emerge from our two-sided incomplete-information games.
\subsection{Bounded Computation per Iteration}
% \subsection{Avoiding Indefinite Game Tree Propagation}
\label{subsection: xdo for agt}
\noindent
Since the termination of the AGT game depends on Blue’s strategy (see~\eqref{eq: termination}), an arbitrary choice of default strategies or initial policies in $\hat{\Gamma}^i_0$ may result in infinite restricted-game or best-response decision trees. 
This is problematic because algorithms such as CFR and standard best-response computation are not directly applicable to infinite trees. We therefore formally show how our choice of default strategies, together with the initialization of the strategy population, avoids this issue.

\begin{prop}\label{prop: avoiding infinite decision trees}
    Consider the default strategies described in Section~\ref{subsection: default strategies} and let $\hat{\Gamma}^i_0 = \{\tilde{\gamma}^i\}$, i.e., the strategy population is initialized to include only the default strategies. Then each iteration step of Algorithm~\ref{XDOalgorithm} avoids solving an infinite-horizon decision problem. 
\end{prop}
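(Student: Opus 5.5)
The plan is an induction on the iteration index $\iter$. At every iteration we check three computations. The restricted game solved by CFR must have a finite tree. Red's best response to $\extstrategy^{\blue}$ must be a finite decision problem. Blue's best response to $\extstrategy^{\red}$ must reduce to a finite problem. The inductive invariant is that every pure Blue strategy in $\restrictedpolicyset^\blue_\iter$ is \emph{proper}, meaning it reaches $\goalset(\bluetype)$ within a uniformly bounded number of steps against every Red strategy, and that every strategy in either population is specified on only finitely many info sets, with default play elsewhere.

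\textbf{Base case.} First I show that $\tilde{\gamma}^\blue$ terminates against any Red behavior within $2\bar{L}$ time steps. Here $\bar{L}$ is the maximum over nodes $p$ and types $\bluetype$ of the number of edges on the $\bar d$-shortest path to $\goalset(\bluetype)$.
\begin{itemize}
\item Since $\tilde{\gamma}^\blue$ depends only on $p_t$ and $\bluetype$, and not on $G_t$, the potential $\min_{p_{\text{goal}}\in\goalset(\bluetype)} \bar d(p_t,p_{\text{goal}})$ strictly decreases at every Blue move.
\item By the argmin definition of $\tilde{\mathcal{A}}^\blue$, the next node lies on a $\bar d$-shortest path. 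Edge weights are positive, so the decrease is at least $\min_{e}\bar w(e)>0$.
\item Hence the game terminates in finitely many steps. The bound is uniform because $\nodeset$ is finite and reachability of $\goalset$ is assumed.
\end{itemize}
With $\restrictedpolicyset^\blue_0=\{\tilde{\gamma}^\blue\}$, the restricted Blue action set at each info set is a singleton. So every restricted-game history terminates within $2\bar L$ steps, and the restricted tree is finite since branching factors are finite.

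\textbf{Best responses at the current iteration.}
\begin{itemize}
\item \emph{Red's best response.} Suppose every Blue strategy in the population and the default are proper. Then $\extstrategy^\blue$ mixes proper behavior inside $\hat{\mathcal I}^\blue$ with $\tilde\gamma^\blue$ outside. The restricted part is a finite tree, and once play leaves it the default terminates within $2\bar L$ further steps. So Red's best response is a finite-horizon problem solved by backward induction.
\item \emph{Blue's best response.} Outside $\hat{\mathcal I}^\red$, Red plays $\tilde\gamma^\red$ and the graph stays frozen at some $G^k$ forever. Once play exits the finite restricted region, Blue faces a static deterministic shortest-path problem on $G^k$ with known type. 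Its value is $\min_{p_{\text{goal}}\in\goalset(\bluetype)} d_{G^k}(p,p_{\text{goal}})$, which can be precomputed by Dijkstra for every $k$ and $\bluetype$. These values serve as terminal payoffs at the leaves where play exits. Blue's best response is then backward induction on a finite tree, with leaves either goal nodes or exit points.
\end{itemize}

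\textbf{Inductive step.} The new strategies must preserve the invariant.
\begin{itemize}
\item The Red best response is defined on the finite tree and extended by defaults, so it is specified on finitely many info sets.
\item The Blue best response follows finite-tree choices and then shortest paths on a fixed graph. This is a finite path to the goal, since positive weights rule out cycles in an optimal path.
\end{itemize}

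\textbf{Main obstacle.} The new Blue strategy is proper only along histories consistent with the Red behavior it was computed against. In a later restricted game, Red may play differently and drive Blue into info sets where the stored pure strategy was never specified. My plan is to define each added Blue best response as taking its computed action on the finite set of info sets it optimized over and $\tilde\gamma^\blue$ everywhere else. This is a legitimate best response, since off-path choices do not affect utility. Each added strategy is then a finite modification of $\tilde\gamma^\blue$.

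Such a strategy is proper, but for a subtler reason than the potential argument. Any history either stays inside the finite modified region, which has bounded depth, or leaves it and then obeys $\tilde\gamma^\blue$, which terminates within $2\bar L$ steps. So play from any info set terminates within a uniform bound, and the restricted tree at iteration $\iter+1$ is finite. The delicate part is showing that the modified region truly has bounded depth, i.e.\ that it never revisits nodes indefinitely. I would argue this from the finiteness of the tree on which the best response was computed.
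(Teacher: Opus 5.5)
Your proof is correct and has the same skeleton as the paper's appendix argument. Both use induction on $\iter$, with the base case coming from properness of $\tilde{\gamma}^{\blue}$ against arbitrary Red play. Blue's best response is finite because exiting the restricted game freezes the graph under $\tilde{\gamma}^{\red}$. Red's best response is finite because exiting hands play to the proper $\tilde{\gamma}^{\blue}$.

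Where you differ is the passage to iteration $\iter+1$. The paper settles it in one sentence: the best responses terminate, so the enlarged restricted game is finite. You instead observe that a best response is only determined on histories reachable against $\extstrategy^{\red}$. You therefore carry the invariant that every Blue population member agrees with $\tilde{\gamma}^{\blue}$ outside a finite set of info sets.

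That invariant is what actually closes the induction, and it is needed for more than the off-path issue you raise. By \eqref{eq:restrictedActionSet}, the restricted game lets Blue combine, info set by info set, actions taken from \emph{different} population members. So properness of each member separately does not bound the restricted tree. For example, two members that each reach the goal within a uniformly bounded number of moves from every info set can alternately prescribe $p\to q$ and $q\to p$ at successive histories, making an endless oscillation admissible in the restricted game. Your closing sentence ("play from any info set terminates within a uniform bound, and the restricted tree at iteration $\iter+1$ is finite") makes exactly this per-strategy inference, so it should be rephrased.

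Phrase that last step through the union instead. The finitely many finite modification sets have a largest time index $D$. Beyond $D$, $\restrictedactions^{\blue}_{\iter+1}(I^{\blue})$ is the singleton default action, so every restricted history has length at most $D$ plus the default's termination bound.

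The ``delicate part'' you flag is then immediate. Each modification set lies in the finite tree on which the best response was computed, together with the finitely many deterministic post-exit shortest paths on the frozen $G^k$. Do include those post-exit histories in the modification set rather than reverting to $\tilde{\gamma}^{\blue}$ at the exit point. Otherwise the stored strategy is not the best response whose value you obtained from the Dijkstra leaf payoffs.
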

% As a result, Algorithm~1 can be applied using finite restricted games despite the underlying indefinite-horizon formulation.

% We initialize the strategy population for each player $i$ to include only the default strategy, i.e., $\hat{\Gamma}^i_0 = \{\tilde{\gamma}^i\}$.

\begin{proof}
See appendix. 
\end{proof}

% \begin{proof}
% We must show that: (i)~restricted game defined in line~3 is finite; and (ii)~best-response computation for both players in line~6 is finite.

% For point (i), observe that Blue's default strategy defined in ... guarantees termination of the original (unrestricted) game in finite time since it is based on shortest paths. 
% Note that this is true against any Red's strategy. Consequently, the restricted game initialized by the specified default strategies is finite.

% For point (ii), first consider Blue's best response against Red's strategy. If ... remains within the restricted game, then the termination follows directly from the finiteness of the restricted game.  Whereas if ... then the terrain graph becomes fixed, and Blue's shortest path guarantees termination in finite time.  Therefore, every best response of Blue to ... can be found within a finite horizon.

% Finally, consider Red's best response...
% \qed
% \end{proof}

Proposition~\ref{prop: avoiding infinite decision trees} ensures that Algorithm~1 can be applied using finite restricted games despite the underlying indefinite-horizon formulation.

\subsection{Finite Time Convergence}
%\daigo{Motivate the Theorem.}
\noindent
Although we have shown that our adaptation of XDO allows us to run it on our game by ensuring that the computation at each iteration is finite, 
%restricted game and the computation of the best responses are always finite, 
this does not necessarily imply that the algorithm ever terminates. %, because the proof in~\cite{mcaleer2021xdo} relies on the finiteness of the unrestricted game. 
The following result therefore guarantees that this algorithm finds an equilibrium. 

\begin{theorem}\label{th:completeness}
    Given default strategies described in Section~\ref{subsection: default strategies}, the initialization of the strategy population in Section~\ref{subsection: xdo for agt}, and $\epsilon_1 <\epsilon_2/2$, Algorithm~\ref{XDOalgorithm} (XDO) terminates after finding an $\epsilon_2$-NE of the AGT game in finite number of iterations. 
    % applied to the AGT game terminates after a finite number of iterations with $\epsilon$-NE strategies for both players. 
\end{theorem}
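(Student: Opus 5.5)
Two things need to be established for Theorem~\ref{th:completeness}: (a) if Algorithm~\ref{XDOalgorithm} terminates, the returned profile $\extstrategy^*$ is an $\epsilon_2$-NE of the full AGT game; and (b) it terminates after finitely many iterations. Part (a) comes directly from the stopping test in line~7. Termination means $U^i(\bestresponse^i(\extstrategy^{-i*}),\extstrategy^{-i*})\le U^i(\extstrategy^*)+\epsilon_2$ for both $i$. These utilities are evaluated in the full game, and Proposition~\ref{prop: avoiding infinite decision trees} guarantees they are well defined and finite to compute. This is exactly the definition of an $\epsilon_2$-NE, so the real content is (b).

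For (b), I would reduce the AGT game to a finite extensive-form game that contains all play relevant to the algorithm, and then reuse the finiteness argument of~\cite{mcaleer2021xdo}. The key steps are as follows.

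\textbf{Step 1 (bounded horizon).} The default strategy $\defaultblue$ follows shortest paths in $\bar G$, where every edge carries its maximum cost. Under $\defaultblue$, Blue's distance $\bar d(p_t,p_{\text{goal}})$ decreases by at least $\min_{(i,j)}\bar w(i,j)>0$ at every Blue move, whatever Red does. So Blue reaches $\goalset(\bluetype)$ within $N$ Blue moves.

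Against Red's default, the graph stays at some fixed $G^k$ and the state space $\mathcal{S}$ is finite. Any best response of Blue can therefore be taken to be proper and shortest-path-like, so it also terminates within a bounded number of steps. Here I would invoke the bounded-tree structure established in Proposition~\ref{prop: avoiding infinite decision trees}.

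\textbf{Step 2 (finite pure-strategy space).} By induction on $\iter$, every strategy added to $\restrictedpolicyset^i_\iter$ only matters on histories of length at most some uniform bound $\timebradds$. Such a strategy is determined by its actions on finitely many info sets $\Theta^i\times\mathcal{H}_t$ with $t\le \timebradds$. The set of distinct pure strategies that can ever be added, taken modulo their (irrelevant) behavior off this finite tree, is therefore finite.

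\textbf{Step 3 (progress).} Follow~\cite{mcaleer2021xdo}. If the test in line~7 fails, some player's best response gains more than $\epsilon_2$ against $\extstrategy^{-i*}$. Since $\restrictedpolicy^*$ is an $\epsilon_1$-NE of the restricted game and $2\epsilon_1<\epsilon_2$, this best response cannot already be representable in the restricted game. If it were, its gain would be at most $2\epsilon_1$: the $\epsilon_1$-NE bound covers the restricted strategies, and the extension by default strategies adds at most another $\epsilon_1$ through the mismatch between restricted and extended evaluation. So the new best response introduces at least one new action at some info set in the finite tree of Step~2. This strictly enlarges $\restrictedactions^i_\iter$, which is bounded above by the finite set of (info set, action) pairs, so only finitely many iterations can occur.

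The main obstacle is Step~2: getting a uniform horizon bound $\timebradds$ that does not grow with $\iter$. Red's best responses can be reactive, and Blue's best responses to non-default Red strategies could in principle wander. My first attempt would bound the length of every play under any profile built from added strategies and defaults by $\timebradds=2N\cdot(K+1)$. The argument: Red's nontrivial graph changes can only occur inside the restricted tree, outside it the graph is frozen, and after that Blue's best response is a shortest path on the frozen $G^k$, so it ends within $N$ moves. Any Blue best response that loops without progress is weakly dominated, because edge costs are positive. I would then fix a tie-breaking rule so that best-response oracles always return such bounded-length strategies.
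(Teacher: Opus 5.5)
Part (a) is fine. Step 3 would also go through \emph{if} best responses could only ever touch a fixed finite set of (info set, action) pairs. But that finiteness is exactly what Step 2 must supply, and the uniform horizon $2N(K+1)$ does not hold under the theorem's hypotheses.

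First, the action graph $\redGraph(\redtype)$ is only assumed to have self-loops. It may contain cycles such as $G^1\to G^2\to G^1$, so Red can change the graph arbitrarily often, and every change a Red best response introduces becomes part of the next restricted tree. There is no bound of $K$ on the number of \emph{phases}.

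Second, the claim that looping is weakly dominated because edge costs are positive is only true against a stationary environment. Inside the restricted tree Blue faces a history-dependent, generally mixed $\extstrategy^{\red*}$ and is uncertain of $\redtype$, so backtracking or circling can be strictly optimal. It changes the history Red reacts to, and it can elicit type-revealing Red moves. If Red randomizes between a cheap and an expensive graph each time Blue is next to the goal, a step-away-and-retry policy produces optimal plays of every length with positive probability.

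No tie-breaking rule removes a strictly optimal long play. Proposition~1 does not help either: it only gives a finite tree at each iteration, and that tree's depth can grow with $\tau$. Without a uniform horizon, the pool of (info set, action) pairs in Step 3 is infinite, and strict enlargement of $\restrictedactions^i_\tau$ no longer forces termination.

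The missing idea is that no horizon is needed, only a tail bound; this is how the paper argues.
\begin{itemize}
\item Let $\tilde{T}$ be the first time $\bestresponse^i(\extstrategy^{-i*})$ uses an action outside $\restrictedactions^i_\tau$. Compare it with its truncation, which switches to the default strategy from $\tilde{T}$ on.
\item The truncation is a restricted-game strategy, so its payoff is at most $U^i(\extstrategy^{*})+\epsilon_1$.
\item The two strategies differ only on plays that survive to $\tilde{T}$. There, continuation payoffs are bounded by $\bar v=2\maxdistance$, because the default is always in the population.
\item Blue's expected cost is at most $\bar v$, while every Blue move costs at least $C_{\min}=\min_{k,(p,p')}w^k(p,p')>0$. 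A Markov-type argument therefore bounds the survival probability by $2\bar v/(C_{\min}\tilde{T})$.
\item Hence the exploitability is at most $\epsilon_1+2\bar v^2/(C_{\min}\tilde{T})$. Whenever the test in line~7 fails, $\tilde{T}$ must lie below a fixed $\tilde{T}^*$; this is where $\epsilon_1<\epsilon_2/2$ is used.
\item Only finitely many (info set, action) pairs exist at depth below $\tilde{T}^*$, and each failing iteration permanently adds one of them, so the algorithm terminates.
\end{itemize}
Replacing your deterministic horizon in Step 2 with this truncation-plus-tail-bound comparison repairs the argument.
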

% \james{Results to show that our algorithm does converge and/or terminate in finite time. }
{
\begin{proof}
See appendix. 
\end{proof}

This result shows that the proposed modification preserves a finite-time equilibrium computation guarantee for the original indefinite-horizon AGT game.

\subsection{Value of Information}
\noindent
To analyze deceptive and counter-deceptive behavior, we use the \emph{Value of Information} (VoI) associated with a player's private information. In words, VoI measures the relative utility loss caused by uncertainty about the opponent's type.

We use the subscripts CI, 1S$i$, and 2S to refer to the complete information, one-sided incomplete-information, and two-sided incomplete-information games, respectively. We treat the CI game as the game in which Nature randomly selects the players' types according to the prior $\prior$, and both types are publicly announced at the beginning of the game. Similarly, in the one-sided incomplete-information game 1S$i$, Nature randomly selects the players' types according to $\prior$, but publicly announces only the type of player $-i$, while the type of player $i$ remains private.

To compute the CI and 1S$i$ game values, let $\mathbf e_{\theta^i}\in\Delta(\Theta^i)$ denote the degenerate distribution on $\Theta^i$, which assigns probability~$1$ to type $\theta^i$. For the CI game, we first solve a collection of 2S games with degenerate priors $(\mathbf e_{\theta^\red}, \mathbf e_{\theta^\blue})$. We then take the expectation of the resulting values $V(S_0,(\mathbf e_{\theta^\red},\mathbf e_{\theta^\blue}))$ with respect to the original prior $\prior$, as shown in~\eqref{eq:V_CI}. The value of the 1S$i$ game is computed similarly in~\eqref{eq:V_1Si}. Note that the expectation with respect to $\prior^i$ in~\eqref{eq:V_1Si} is absorbed into the computation of $V(S_0,(\prior^i,\mathbf e_{\theta^{-i}}))$.

We consider three benchmark game values evaluated at $S_0$ and $\prior=(\prior^\red,\prior^\blue)$:
\begin{subequations}
\begin{align}
V_{\mathrm{CI}}
&\triangleq
\sum_{\theta^\red \in \Theta^\red}
\sum_{\theta^\blue \in \Theta^\blue}
\prior^\red(\theta^\red)\prior^\blue(\theta^\blue)\,
V(S_0,(\mathbf e_{\theta^\red},\mathbf e_{\theta^\blue})), \label{eq:V_CI}\\
V_{\mathrm{1S}i}
&\triangleq
\sum_{\theta^{-i} \in \Theta^{-i}}\prior^{-i}(\theta^{-i}) \,
V(S_0,(\prior^i,\mathbf e_{\theta^{-i}})), \label{eq:V_1Si}\\
V_{\mathrm{2S}}
&\triangleq
V(S_0,\prior). \label{eq:V_2S}
\end{align}
\end{subequations}
Here, \(V_{\mathrm{CI}}\) is the expected value of the complete-information game, \(V_{\mathrm{1S}i}\) is the expected value of the one-sided incomplete-information game in which only player \(i\) has private information, and \(V_{\mathrm{2S}}\) is the value of the two-sided incomplete-information game. 
We now define Value of Information.
\begin{definition} [Value of Information]
We let $\mathrm{VoI}_{\mathrm{1S}i}(i)$ and $\mathrm{VoI}_{\mathrm{2S}}(i)$ be the value of information for knowing player $i$'s type in $\textnormal{1S}i$ game and $\textnormal{2S}$ games respectively, that is:
\begin{subequations}
\begin{align*}
&\mathrm{VoI}_{\mathrm{1S}\red}(\red)
\triangleq
\frac{V_{\mathrm{1S}\red}-V_{\mathrm{CI}}}{V_{\mathrm{CI}}}, \; 
\mathrm{VoI}_{\mathrm{1S}\blue}(\blue)
\triangleq
\frac{V_{\mathrm{CI}}-V_{\mathrm{1S}\blue}}{V_{\mathrm{CI}}},\\
&\mathrm{VoI}_{\mathrm{2S}}(\red)
\triangleq
\frac{V_{\mathrm{2S}}-V_{\mathrm{1S}\blue}}{V_{\mathrm{1S}\blue}},\; 
\mathrm{VoI}_{\mathrm{2S}}(\blue)
\triangleq
\frac{V_{\mathrm{1S}\red}-V_{\mathrm{2S}}}{V_{\mathrm{1S}\red}}. \label{eq:VOI_2S}
\end{align*}
\end{subequations}
\end{definition}

Now we present the main result associated with VoI.

\begin{prop}\label{lem: VoI inequality 1}
The VoI for knowing $\theta^i$ in $\textnormal{1S}i$ game is greater than or equal to VoI for knowing $\theta^i$ in $\textnormal{2S}$ game, iff the opposite is true for the VoI for knowing~$\theta^{-i}$, i.e., 
    $\textnormal{VoI}_{\textnormal{1S}i}(i)\geq \textnormal{VoI}_{\mathrm{2S}}(i)$  iff $\textnormal{VoI}_{\textnormal{1S}-i}(-i)\leq \textnormal{VoI}_{\mathrm{2S}}(-i)$.
\end{prop}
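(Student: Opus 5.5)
The plan is to reduce both sides of the equivalence to one and the same cross-multiplied inequality among the four benchmark values. Write $C\triangleq V_{\mathrm{CI}}$, $R\triangleq V_{\mathrm{1S}\red}$, $B\triangleq V_{\mathrm{1S}\blue}$ and $S\triangleq V_{\mathrm{2S}}$.

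First I will argue that all four quantities are strictly positive, so that every denominator in the VoI definitions is nonzero and multiplying through preserves inequalities. Each value is the equilibrium expected total cost $J$ in \eqref{eq:gameCost}, or a prior-weighted average of such costs as in \eqref{eq:V_CI}--\eqref{eq:V_1Si}. All edge weights satisfy $w^k>0$, so each such value is nonnegative. It is strictly positive whenever Blue must traverse at least one edge, i.e. whenever $p_0\notin\goalset(\bluetype)$ for some type with positive prior. I will state this as a standing (implicit) assumption, since otherwise the VoI ratios are undefined. I also note that Theorem~\ref{th:completeness} only provides $\epsilon$-NE values. I will therefore treat $V$ as the exact game value, which is what the definitions intend.

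Second, take $i=\red$. The left inequality $\mathrm{VoI}_{\mathrm{1S}\red}(\red)\ge \mathrm{VoI}_{\mathrm{2S}}(\red)$ reads $(R-C)/C\ge (S-B)/B$. Adding $1$ to both sides gives $R/C\ge S/B$, which, by positivity, is equivalent to $RB\ge SC$. The right inequality $\mathrm{VoI}_{\mathrm{1S}\blue}(\blue)\le \mathrm{VoI}_{\mathrm{2S}}(\blue)$ reads $(C-B)/C\le (R-S)/R$. Subtracting both sides from $1$ gives $B/C\ge S/R$, which is again equivalent to $RB\ge SC$. Hence the two statements are equivalent.

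Third, the case $i=\blue$ asks whether $\mathrm{VoI}_{\mathrm{1S}\blue}(\blue)\ge\mathrm{VoI}_{\mathrm{2S}}(\blue)$ holds iff $\mathrm{VoI}_{\mathrm{1S}\red}(\red)\le\mathrm{VoI}_{\mathrm{2S}}(\red)$. The same manipulations turn both sides into $RB\le SC$, so this case follows identically; it is also the contrapositive form of the previous computation with the inequalities flipped.

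The main obstacle is not the algebra, which is a short computation. It is justifying the positivity and well-definedness of the values, which is exactly what licenses cross-multiplication. I would handle this by explicitly assuming $p_0\notin\goalset(\bluetype)$ for all types with positive prior, or at least $C>0$. From $C>0$ it follows that $R,B,S\ge C>0$, because each player's informational advantage can only shift the value in that player's favor. If proving that monotonicity turns out to be delicate, I would fall back on the direct assumption that all four values are positive.
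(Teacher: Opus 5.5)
Your proof is correct and is essentially the paper's argument. The paper also rewrites $\mathrm{VoI}_{\mathrm{1S}\blue}(\blue)\geq\mathrm{VoI}_{\mathrm{2S}}(\blue)$ as $V_{\mathrm{1S}\red}/V_{\mathrm{CI}}\leq V_{\mathrm{2S}}/V_{\mathrm{1S}\blue}$ (your $RB\leq SC$) and then subtracts $1$, tacitly using the positivity you make explicit.

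One correction to your side remark: $R,B,S\geq C$ is false, because hiding Blue's type can only help the minimizer, so $B\leq C$; the paper's example has $V_{\mathrm{1S}\blue}=14.7$ and $V_{\mathrm{2S}}=18.41$, both below $V_{\mathrm{CI}}=18.76$. You do not need that claim. A single type $\bluetype$ with $\prior^\blue(\bluetype)>0$ and $p_0\notin\goalset(\bluetype)$ already forces all four values to be at least $\prior^\blue(\bluetype)\,C_{\text{min}}>0$, where $C_{\text{min}}$ is the smallest edge weight.
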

\begin{proof}
    We show that the proposition's claim is true for~$i=\blue$. %and $-i=\red$. 
    The other case when~$i=\red$ %and $-i=\blue$ 
    can be proven in the same way. 
    % First observe that
    % \begin{align}
    %     0<V^{\textnormal{1S}\blue}\leq V^{\textnormal{CI}}\leq V^{\textnormal{1S}\red},\;
    %     0< V^{\textnormal{1S}\blue}\leq V^{\textnormal{2S}}\leq V^{\textnormal{1S}\red}.
    %         \end{align}
    Observe that the inequality $  \textnormal{VoI}_{\textnormal{1S}\blue}(\blue)\geq  \textnormal{VoI}_{\textnormal{2S}}(\blue)$ can be rewritten:
%\begin{equation*}
%\begin{split}
           $ \frac{V_{\textnormal{CI}}-V_{\textnormal{1S}\blue}}{V_{\textnormal{CI}}}\geq \frac{V_{\textnormal{1S}\red}-V_{\textnormal{2S}}}{V_{\textnormal{1S}\red}} \Leftrightarrow \frac{V_{\textnormal{1S}\red}}{V_\textnormal{CI}}\leq \frac{V_{\textnormal{2S}}}{V_{\textnormal{1S}\blue}} \Leftrightarrow
       \frac{V_{\textnormal{1S}\red}}{V_\textnormal{CI}}-1\leq \frac{V_{\textnormal{2S}}}{V_{\textnormal{1S}\blue}}-1 \Leftrightarrow\textnormal{VoI}_{\textnormal{1S}\red}(\red)\leq \textnormal{VoI}_{\textnormal{2S}}(\red)$.
%\end{split}
%\end{equation*}
\qed
\end{proof}
%Proposition~\ref{lem: VoI inequality 1} states that \emph{one and only one} player will improve the value of their private information, when facing uncertainty about opponent's type in the two-sided game, compared to the one-sided game.
Proposition~\ref{lem: VoI inequality 1} states that, relative to the one-sided game, \emph{one and only one} player can improve the value of their private information in the two-sided game, where each player is uncertain about the opponent's type.
}

\section{Numerical Illustration}
\noindent
This section provides examples that illustrate the deceptive and counter-deceptive behaviors that emerge as the solutions to the incomplete-information AGT game.
Notably, in our two-sided incomplete-information game, each player must simultaneously take advantage of its own private information while also mitigating the effect of uncertainty due to the opponent's private information.
To analyze the behaviors under those two possibly competing objectives, we consider cases where one player has only a single type, creating one-sided incomplete-information games.
We call these \emph{Blue's} game and \emph{Red's} game for conciseness.
These simplified settings allow us to extract purely deceptive and purely counter-deceptive behaviors against particular type.
Throughout the section, we consider the terrain graph and Red's action graphs depicted in Fig.~\ref{fig: action graph}.
\begin{figure}[h!]
    \centering
    \includegraphics[width=\linewidth]{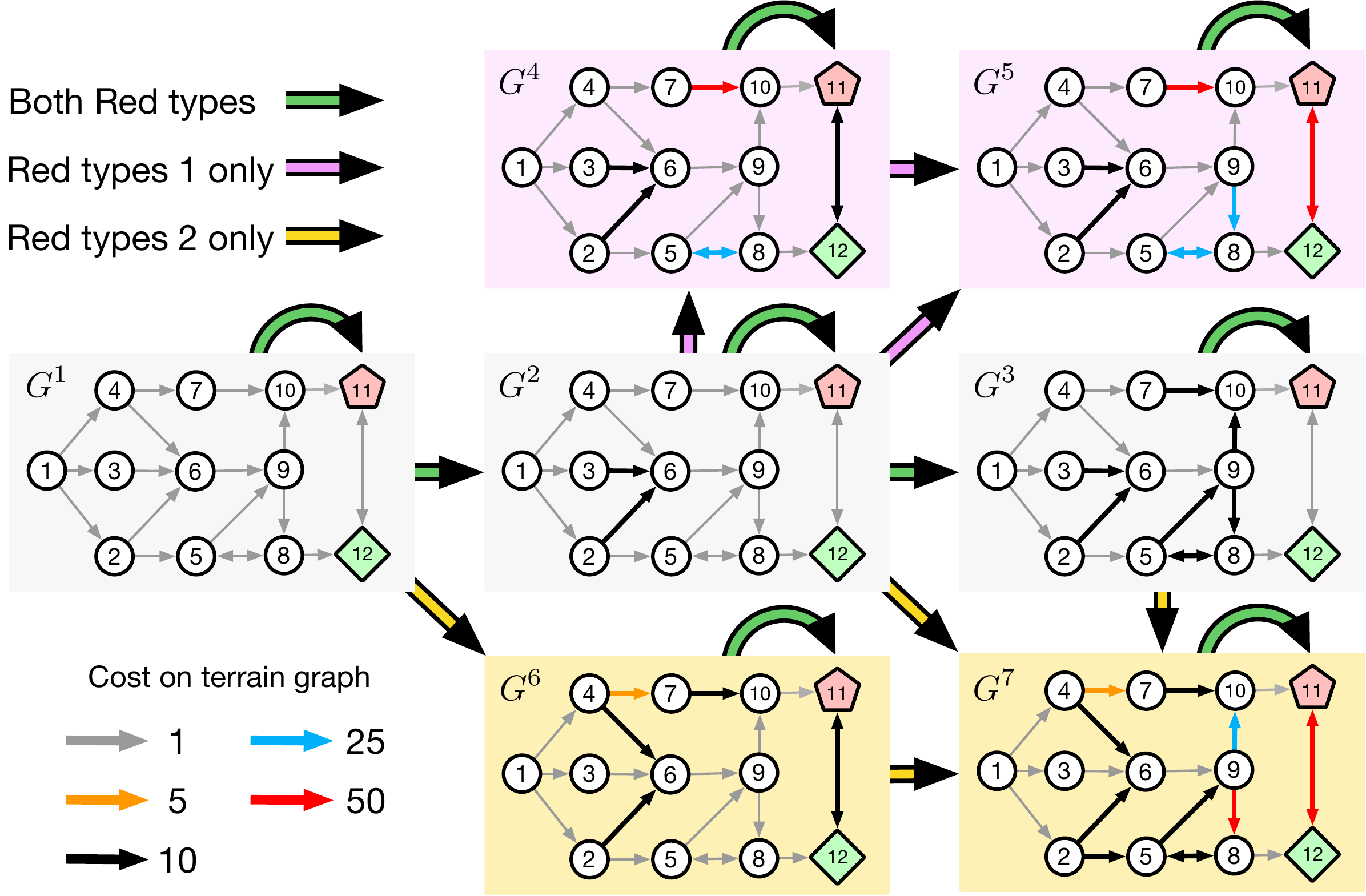}\caption{Action graph for Red player types~$1$ and~$2$. Both types can traverse the green edges, while only type~$1$ can traverse the pink edges and only type~$2$ can traverse the yellow edges. Since the game starts at~$G^1$, nodes $G^6$ and $G^7$ are unreachable for type~$1$, whereas nodes $G^4$ and $G^5$ are unreachable for type~$2$.}\label{fig: action graph}   \vspace{-5 pt}
\end{figure}
\begin{figure}[h!]
    \centering
    \includegraphics[width=\linewidth]{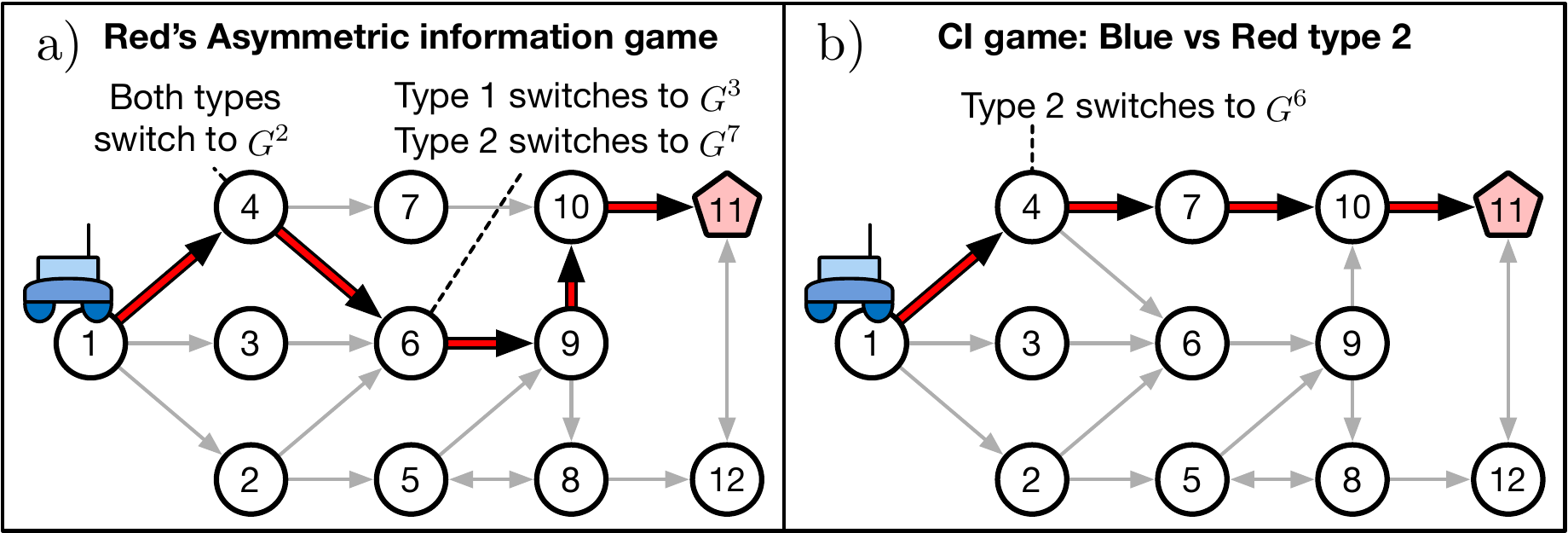}\caption{ a) Equilibrium behavior of the players in a game where only the Red player has a private information, payoff is~$17$, $\text{VoI}_{\text{1S}\red}(\red)=0.173$, Blue's belief is shown on Fig~\ref{Fig. 2 sided}c;  b) equilibrium behavior of the players in a complete information game, where Blue knows Red is type~$2$. }\label{Fig. reds game}  
    %\vspace{-15 pt}
\end{figure}
\noindent

% In this section we present deceptive behavior in three different cases: a) only Red player has private information, b) only Blue player has private information, and c) both players have private information. In case when the player does not have private information, its type set is singleton. Figures~\ref{Fig.action graph type 1} and~\ref{Fig.action graph type 2} show the set up for different types of Red player. 
In all three cases the game starts at $S_0=(p_0=1, G_0=G^1)$. The Blue player type~$1$ needs to reach node~$11$ on the terrain graph (shown as pentagon node), meanwhile type~$2$ needs to reach node~$12$ (shown as rhombus node), i.e. $\mathcal{F}(1)=\{11\}$ and $\mathcal{F}(2)=\{12\}$.

% \note{We won't have a lot of space, so what do we want to show in examples?} 

% To quantify the effect of private information, we define the Value of Information (VoI) for knowing type of player~$i$ defined as follows:
% \begin{equation}
% \mathrm{VoI}(i; \mathbf{p})
% =
% \frac{
% \bar U^{-i}(S_0,\mathbf p)
% -
% U^{-i}(\gamma^*;S_0,\mathbf p)
% }{
% \bar U^{-i}(S_0,\mathbf p)
% },
% \label{eq:voi_1s}
% \end{equation}
% \begin{equation*}
% \bar U^{-i}(S_0,\mathbf p)
% :=
% \sum_{\theta^i \in \Theta^i}
% \mathbf p^i(\theta^i)\,
% \bar{U}^{-i}(\bar{\gamma}^{*};S_0,(\mathbf{e}_{\theta^i},\mathbf{p}^{-i})),
% \label{eq:ci_benchmark_1s}
% \end{equation*}
% where $\mathbf{e}_{\theta^i}$ denotes the degenerate distribution on $\Theta^i$ that assigns probability one to type $\theta^i$. The term $\bar{U}^{-i}(\bar{\gamma}^{*};S_0,(\mathbf{e}_{\theta^i},\mathbf{p}^{-i}))$ is player $-i$'s utility in a game where player $i$'s realized with unit probability, and therefore player $-i$ automatically knows opponent's type $\theta^i$.
% The term $\bar U^{-i}(S_0,\mathbf p^i)$ is the benchmark, obtained by averaging the uninformed player’s utility across the games in which player $i$'s realized type is known to player $-i$.
% Thus, $\mathrm{VoI}(i;\mathbf{p})$ measures the relative utility loss for player~$-i$ caused by uncertainty about player $i$'s type.

\subsection{Red's Game: Only Red has private information}\label{sec. red's game} 

\noindent
Figure~\ref{Fig. reds game}a shows an example in which only the Red player has private information (Red's game), with type distribution $\prior^\red=[0.8,0.2]^\top$, meanwhile the Blue player's prior is degenerate, i.e. $\prior^\blue=[1,0]^\top$.
Under the equilibrium strategy, Red always switches to graph $G^2$ when Blue reaches node~$4$. When Blue reaches node $6$, Red type~$1$ switches to $G^3$, which increases edge cost of~$(9,10)$ from~$1$ to $10$, whereas  Red type~$2$ switches to $G^7$, which increases edge cost of~$(9,10)$ from~$1$ to $25$. 

% Under the equilibrium strategy, red type 1 attacks when blue reaches node 3, whereas red type 2 mixes between the following two: (i) attack immediately; and (ii) wait until blue reaches node 3.
% Since red's attack immediately reveals its type in this example, blue best responds once the attack occurs. Otherwise, if red waits holds the attack until blue is at node 3, it mixes between the upper route and lower route. Depending on the true type revealed at this time step, blue's best response is either (i) proceed to the goal; or (ii) back track to avoid the edge with highest cost.

In the complete information (CI) game (if Blue knows Red's type), Blue will take the route shown in Fig.~\ref{Fig. reds game}a against Red type~$1$ and the one in Fig.~\ref{Fig. reds game}b against type~$2$. As indicated in Fig.~\ref{Fig. reds game}b, Red type~$2$ maximizes its payoff by switching to $G^6$ which has a higher cost over the edge $(4,7)$.
Note that switching to $G^2$ as in Red's game will reduce the cost in this CI~game.
These observations imply that Red is sacrificing its immediate reward, but maintains its information advantage by selecting $G^2$ under both types in the incomplete-information game.

Red's equilibrium behavior maintains ambiguity in its type, so the Blue ``hedges'' by selecting the path $(1,4,6,9,10,11)$ for the expected cost of $17=0.2\cdot 29 + 0.8\cdot 14$. 
This path avoids a large penalty from using $(1,4,7,10,11)$ against Red type~$1$, who can make $(7,10)$ to have the cost of 50, resulting in the expected cost of $45.8=0.2\cdot17+0.8\cdot53$.

It is also notable that Red's strategy maintained Blue's belief at the prior, rather than manipulating it toward the uniform belief. Prior work suggests that the uniform belief is optimal from an information-theoretic perspective~\cite{FuJie2025,karabag2021deception,probine2024decentralized}. In contrast, our result shows that when belief manipulation is only a means of achieving the underlying mission objective, rather than the objective itself, the equilibrium belief need not be the maximum-entropy belief.
\begin{figure}[t]
    \centering
    \includegraphics[width=\linewidth]{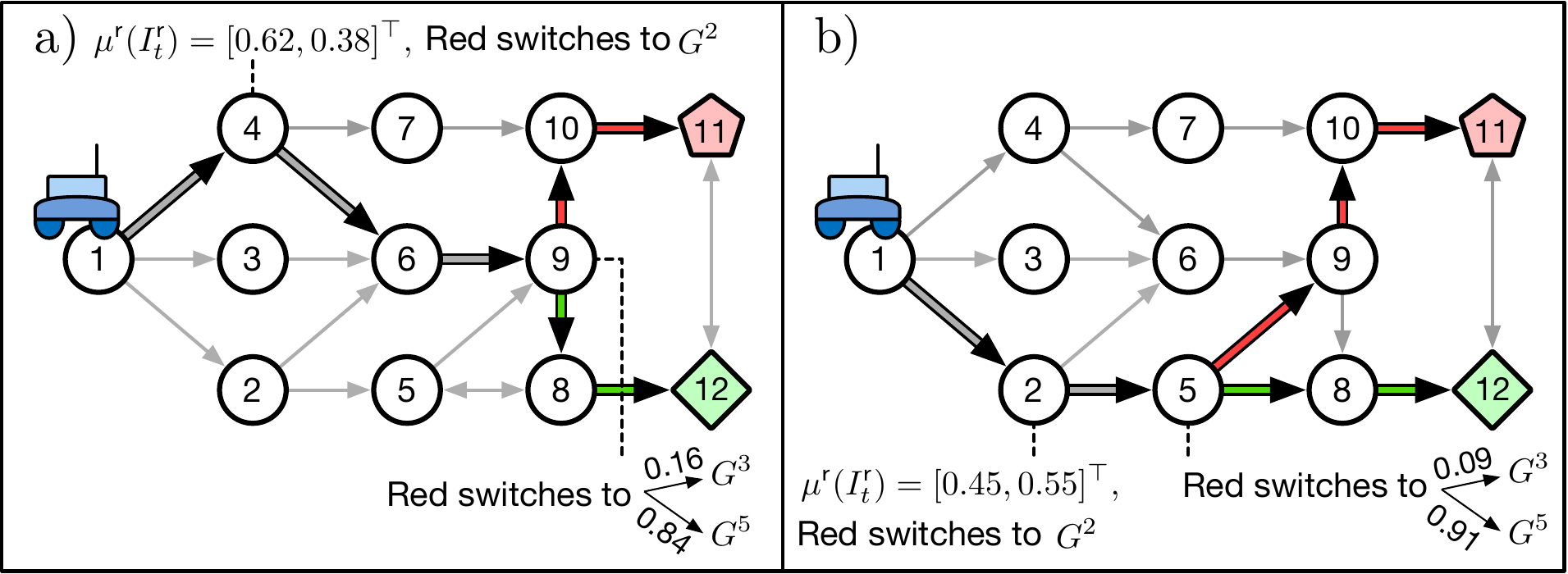}\caption{Equilibrium strategies in the Blue's game. Each Blue type mixes over two paths. Blue type~$1$ selects the gray–red path in a) and b) with probabilities $0.89$ and $0.11$, respectively. Blue type~$2$ selects the gray–green path in a) and b) with probabilities $0.81$ and $0.19$, respectively. Red's belief over Blue's type is noted at the nodes.}\label{Fig. blues game} 
    \vspace{-5pt}
    \end{figure}
% always choose the lower route to minimize the cost. This incentivized the red to attack immediately to max-min the traversal cost. 
% If red maintained ``ambiguous'' with probability 1, i.e., if type 2 never revealed early, the blue can exploit such strategy by always selecting the lower(?) route.  The corresponding value is XXX which is worse for the red.

% It is also notable that the red's strategy drove the belief to something that is non-uniform. 
% This in contrast to the DPP game, in which the belief manipulation drove it to uniform belief over type.
% Other literature also suggests such uniform belief to be optimal from information theoretic sense \todo{cite Jie Fu? and others}.
% Our result highlights that when belief manipulation is only a means to accomplish other mission (induce blue's uncertainty to maximize cost) and not the objective itself, then belief state other than highest-entropy one could be optimal.

\subsection{Blue's Game: Only Blue has Private Information}\label{sec: blue's game}
\noindent
Figure~\ref{Fig. blues game} shows \emph{Blue's game} in which only the Blue player has private information, corresponding to its goal nodes 11 and 12, with prior $\prior^\blue=[0.6,0.4]^\top$. Red is always type~$1$ from Fig.~\ref{fig: action graph}, which is known to Blue since $\prior^\red=[1,0]^\top$.

    % Equilibrium strategies in a game with asymmetric information, where the Blue player’s type determines its goal. 
% Blue type~$1$ selects the gray–red path in a) with probability $0.89$ and in b) with probability $0.11$. Blue type~$2$ selects the gray–green path in a) with probability $0.81$ and in b) with probability $0.19$.
% After observing Blue’s first move, Red’s belief over Blue’s type changes from $[0.6,0.4]^\top$ to $[0.62, 0.38]^\top$ in a) and to $[0.45, 0.55]^\top$ in b). 
% Blue’s type becomes fully revealed after node~$9$ in a) and node~$5$ in b).

Note that Fig.~\ref{Fig. blues game} shows the CI game~optimal path for Blue type~$1$ (gray-red path in~\ref{Fig. blues game}a) and~$2$~(gray-green path in~\ref{Fig. blues game}b).
This implies that each type mixes in a suboptimal path that mimics the behavior of the other type for the sake of manipulating Red's belief.
This deception continues until a point of divergence (nodes 9 or 5), after which the Red player's belief collapses to $[1,0]^\top$ or $[0,1]^\top$.
This uncertainty maintained by Blue's strategy forces Red to ``hedge'' by stochastically selecting either $G^3$ or $G^5$ before the type is revealed.
The expected traversal cost is $14.51$, which is lower than the value of CI~game of $19.6$, yielding $\text{VoI}_{\text{1S}\blue}(\blue) = 0.26$.

% Both Blue types are more likely to select the path in Fig.~\ref{Fig. blues game}a than in Fig.~\ref{Fig. blues game}b. In case (a), Blue type~$1$ follows its complete-information optimal path. In contrast, Blue type~$2$ initially mimics type~$1$ by following the same path up to node~$9$, after which it diverges and reveals its type.

% A less likely scenario occurs when Blue selects the path in Fig.~\ref{Fig. blues game}b. In this case, Blue type~$2$ follows its complete-information optimal path, while Blue type~$1$ mimics type~$2$ up to node~$5$ before deviating and revealing its type.

% In both cases, the Blue player maintains ambiguity over its type until a late stage of the trajectory. As a result, the Red player must hedge against this uncertainty and randomizes between $G^3$ and $G^5$ just prior to the point of type revelation.
Blue's behavior contrasts with the deception observed in the DPP game studied in~\cite{rostobaya2025deceptive}. In particular, in~\cite{rostobaya2025deceptive}, the mobile deceiver associated with the higher-prior type mixes between the its CI game path and a deceptive path, whereas the lower-prior type always follows a single path. 
In our example, by contrast, each type either follows its CI game path or deceives by following the CI game path of the other type up to the point at which the two paths diverge.

\subsection{Two-sided Incomplete-Information Game}
\noindent
Figure~\ref{Fig. 2 sided} illustrates a setting in which both players have private information.
The corresponding Red's action graphs are shown on Fig.~\ref{fig: action graph}.
The Red and Blue distributions are~$\prior^{\red}=[0.8,0.2]^\top$ and~$\prior^{\blue}=[0.6,0.4]^\top$.
% \begin{figure}[h!]
%     \centering
%     \includegraphics[width=\linewidth]{figures/2Sided.pdf}\caption{ Equilibrium strategies in a game with two-sided . 
%     Blue type~$1$ chooses gray-red paths in a) with probability~$0.56$. 
%     Blue type~$2$ always selects gray-green path. Blue's belief evolution after Blue chooses path in a) is shown on b).}\label{Fig. 2 sided}
% \end{figure}

Under equilibrium policies, the game can reduce to Red's game from Section~\ref{sec. red's game}, when Blue type~$1$ selects the upper path in Fig.~\ref{Fig. reds game}a with probability~$0.44$. 
Otherwise, Blue takes the lower path shown in Fig.~\ref{Fig. 2 sided}a. In this case Red initially preserves ambiguity by switching to $G^2$. 
Red type~1 either continues to conceal its type using $G^3$, or reveal itself by switching to $G^5$. Red type~$2$, by contrast, always uses $G^3$ when Blue is at node~$5$.
It reveals its type reveal later by switching to $G^7$. 
Consequently, if graph remains to be $G^3$ when Blue reaches node~8 or 9, it can immediately infer that Red is type~$1$, as shown in Fig.~\ref{Fig. 2 sided}d.

To see how the deception and counter-deception are balanced in 2S game strategies, we compare them to: (i)~CI strategies that do not have either; and (ii)~1S strategies that only contain deception or counter-deception, but not both.

\begin{figure}[t]
    \centering
    \includegraphics[width=\linewidth]{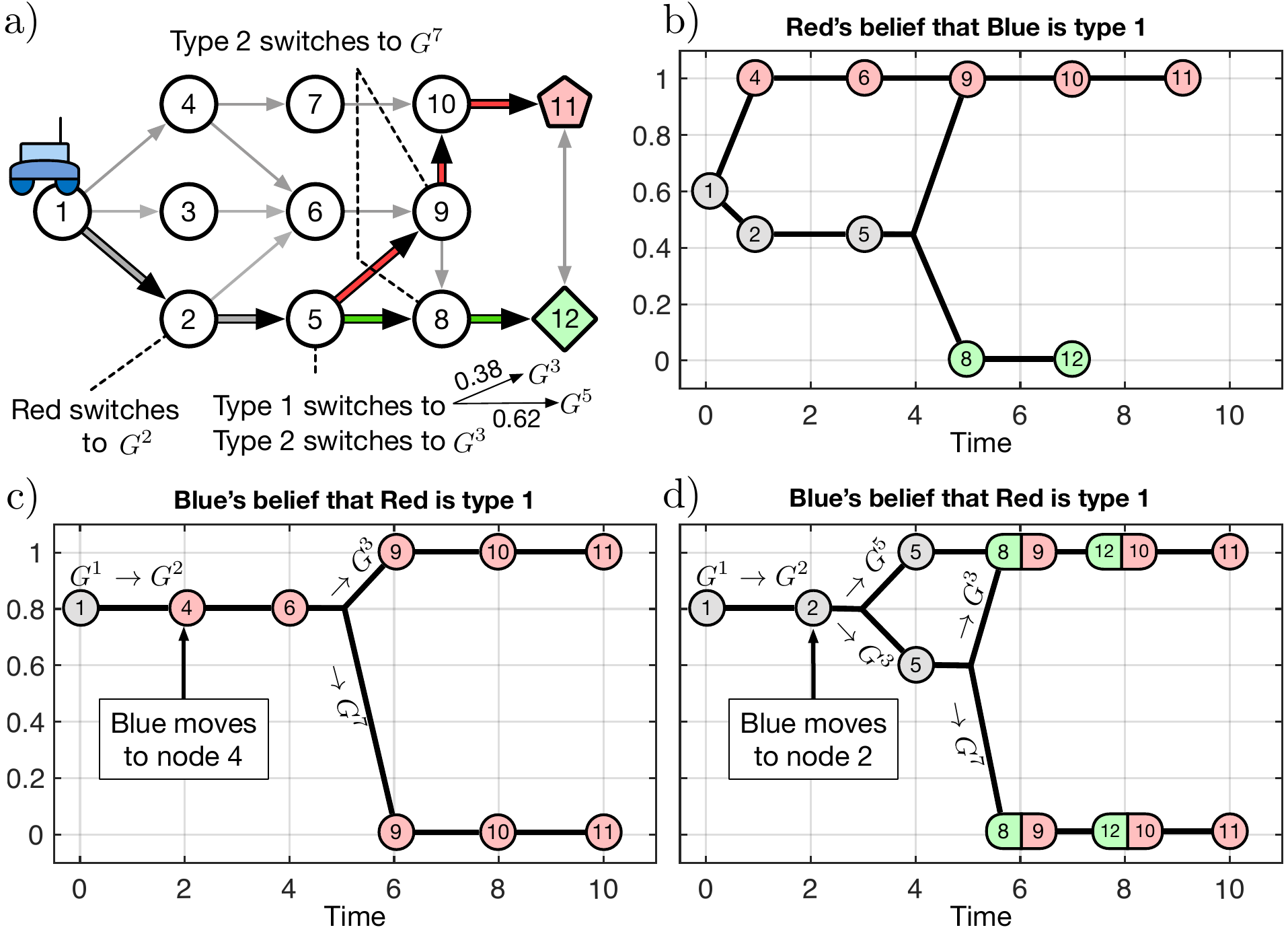}
    \caption{Equilibrium strategies in a game with two-sided incomplete information. 
Blue type~$1$ chooses gray-red paths in a) with probability~$0.56$, otherwise Blue type 1 chooses path in Fig.~\ref{Fig. reds game}a and game turns into Red's game.  Blue type~$2$ always selects gray-green path in a). Red's belief evolution is shown on b). Blue's belief evolution when $a^\blue_0=4$ and $a^\blue_0=2$ is shown on c) and d) respectively.}\label{Fig. 2 sided}
    \vspace{-10pt}
    \end{figure}

% \begin{figure}[h]
%     \centering
%     \includegraphics[width=\linewidth]{figures/Value_fig.pdf}
%     \caption{Comparison between different payoffs: a) $V_{\mathrm{1S}\blue}$, b)~$J_{\mathrm{1S}\red}(\gamma^{\red*}_\mathrm{2S},\bestresponse^\blue(\gamma^{\red*}_\mathrm{2S}))$, c)~$J_{\mathrm{1S}\red}(\gamma^{\red*}_\mathrm{2S},\bestresponse^\blue(\gamma^{\red*}_\mathrm{2S}))$.}\label{Fig. Values}
%     \vspace{-10pt}
%     \end{figure}    
\paragraph*{Blue's deceptiveness} For Blue's deceptiveness analysis, first consider the performance gap between deceptive 1S strategy, $\gamma^{\blue*}_{\mathrm{1S}\blue}$, and the CI strategy, $\gamma^{\blue*}_\mathrm{CI}$.  For the latter, we can evaluate the payoff when Red best responds: i.e., $J_{\mathrm{1S}\blue}(\bestresponse^\red(\gamma^{\blue*}_\mathrm{CI}),\gamma^{\blue*}_\mathrm{CI})$. The difference $\Delta J_{\mathrm{1S}\blue}(\mathrm{CI},\mathrm{1S}\blue)=J_{\mathrm{1S}\blue}(\bestresponse^\red(\gamma^{\blue*}_\mathrm{CI}),\gamma^{\blue*}_\mathrm{CI})-V_{\mathrm{1S}\blue}=4.06$ highlights the Blue's  performance gain from being optimally deceptive (with respect to the CI strategy). 
Now, if we evaluate Blue's 2S strategy in Blue's game, the performance gain compared to the CI strategy is: $\Delta J_{\mathrm{1S}\blue}(\mathrm{CI},\mathrm{2S})=J_{\mathrm{1S}\blue}(\bestresponse^\red(\gamma^{\blue*}_\mathrm{CI}),\gamma^{\blue*}_\mathrm{CI})-J_{\mathrm{1S}\blue}(\bestresponse^\red(\gamma^{\blue*}_\mathrm{2S}),\gamma^{\blue*}_\mathrm{2S})=0.32$. 
These values indicate that while 2S Blue strategy does not contain as much deceptiveness as its 1S strategy (since it is also being counter-deceptive against Red's uncertainty), it still performs better than the CI strategy which has no deception at all.

\paragraph*{Red's deceptiveness} A similar analysis for Red's 2S strategy shows $\Delta J_{\mathrm{1S}\red}(\mathrm{1S}\red,\mathrm{CI})=V_{\mathrm{1S}\red}-J_{\mathrm{1S}\red}(\gamma^{\red*}_\mathrm{CI},\bestresponse^\blue(\gamma^{\red*}_\mathrm{CI}))=1.54$ and $\Delta J_{\mathrm{1S}\red}(\mathrm{2S},\mathrm{CI})=J_{\mathrm{1S}\red}(\gamma^{\red*}_\mathrm{2S},\bestresponse^\blue(\gamma^{\red*}_\mathrm{2S}))-J_{\mathrm{1S}\red}(\gamma^{\red*}_\mathrm{CI},\bestresponse^\blue(\gamma^{\red*}_\mathrm{CI}))=-0.67$. This indicates that Red's 2S strategy performs worse than its CI strategy in the Red's game. 
This is in part due to the fact that 2S Red strategy, which assumes unknown opponent type, does not directly utilize the known Blue-type information given in Red's game. 
It also implies the limited deceptiveness of Red's 2S strategy, which is penalized in the Red's game.
Compared to the Blue's behavior discussed earlier, Red is performing more counter-deception in the 2S game.

\paragraph*{Comparative value of information}
% The equilibrium cost of traversal is $18.41$, and the value of information is
The equilibrium costs are $V_{\mathrm{CI}}=18.76$, $V_{\mathrm{1S}\red}=20.3$, $V_{\mathrm{1S}\blue}=14.7$ and $V_{\mathrm{2S}}=18.41$. The reduction in the game value from CI to 2S setting indicates that Blue benefited while Red suffered from the uncertainty in the 2S game.
When we look at the VoI, we obtain
$\text{VoI}_\mathrm{2S}(\blue)=0.09$, 
%($\bar U^{\red}(S_0,\mathbf p)=20.3$) 
$\text{VoI}_{\mathrm{2S}}(\red)=0.252$, 
%($\bar U^{\blue}(S_0,\mathbf p)=14.7$).
$\text{VoI}_{\mathrm{1S}\red}(\red)=0.08$, and $\text{VoI}_{\mathrm{1S}\blue}(\blue)=0.217$.
Note that value of Blue's private information decreased in 2S game in comparison with from 1S$\blue$ game, whereas the opposite happened for the Red player, which confirms statement of Proposition~\ref{lem: VoI inequality 1}.
%In the complete information game, where both players know each other's type, the expected payoff is 18.76, which indicates that the Blue player benefits more from the uncertainty about its own type and type of the Red player, despite its $\text{VoI}(\red;\mathbf{p})$ being high. 

%\daigo{Can we say something about the deceptiveness vs counter-deceptiveness in this two-sided scenario?}

%\subsection{Example with Blue player backtracking?}

\section{Conclusion}

\noindent
We formulated adversarial graph traversal as a sequential-move zero-sum stochastic shortest path game with two-sided incomplete information, in which both players update beliefs from observed actions and must balance deception with counter-deception. To solve this indefinite-horizon game, we developed a modification of the XDO algorithm based on specific default strategies, which ensures finite restricted games and finite best-response computation. We further established finite-time convergence of the proposed method to an $\epsilon$-NE. Numerical examples demonstrate how deceptive and counter-deceptive behaviors are balanced under equilibrium strategies and how they drive the value of information.

\appendix

\subsection{Proof of Proposition~\ref{prop: avoiding infinite decision trees}}
\noindent
We show by induction that at each iteration $\tau \geq 0$ of Algorithm~\ref{XDOalgorithm}, the restricted game constructed in line~3 is finite and the best-response computations in line~6 are carried out over finite trees.

For the initial case at $\tau=0$ where $\hat{\Gamma}_0^\blue=\{\tilde{\gamma}^\blue\}$, $\tilde{\gamma}^\blue$ follows a shortest path, with respect to $\bar d$, to the closest node in $\mathcal{F}(\theta^\blue)$. Hence it guarantees termination in finite time, regardless of Red's strategy, and is therefore proper. It follows that the initial restricted game tree is finite.

%\james{
Now assume that the restricted game tree is finite at iteration $\tau$ consider the best-response computations. 
%For each player $i$'s best response, if it only takes some action $a^i\in \hat{\mathcal{A}}_{\tau}^i(I^i)$, then it remains in the restricted game and so the decision tree is finite. %because the restricted game is finite. 
%First, for Blue's best response, if Blue only takes some action $a^\blue\in \hat{\mathcal{A}}_{\tau}^\blue(I^\blue)$, Blue remains in the restricted game, then the decision tree is finite because the restricted game is finite. 
Because it is finite, we only need to be concerned with actions that leave the restricted game, $a^i\notin \hat{\mathcal{A}}_{\tau}^i(I^i)$. 
If Blue leaves the restricted game, Red follows its default strategy $\tilde{\gamma}^\red$, after which the terrain graph becomes fixed and Blue can reach the closest node in $\mathcal{F}(\theta^\blue)$ in finite time by following a shortest path on the remaining fixed graph. 
Thus Blue's best-response tree is finite, and the resulting best-response strategy is proper.
%
%Second, for Red's best response, if Red only takes some action $a^\red\in \hat{\mathcal{A}}_{\tau}^\red(I^\red)$ and remains within the restricted game, finiteness again follows from the finiteness of the restricted game. 
If Red leaves the restricted game, Blue follows its default strategy $\tilde{\gamma}^\blue$, which is proper. Hence Red's best-response tree is finite as well.

Therefore, at any iteration $\tau$, if the restricted game tree is finite, then the best-response computations do not require solving an infinite-horizon decision problem.
Additionally, because the game terminates in finite time under the best response strategies, %the game tree is still finite when they are added, i.e. 
the restricted game tree is finite at iteration $\tau+1$ when they are added. %}
Therefore, by induction, at every iteration $\tau\geq 0$ of Algorithm~\ref{XDOalgorithm}$,$ the restricted game is finite and the best-response computations avoid solving an infinite-horizon decision problem. \qed

\vspace{-2.4 pt}
\subsection{Proof of Theorem~\ref{th:completeness}}
\noindent
By construction, if Algorithm~\ref{XDOalgorithm} terminates, then it provides $\epsilon_2$-NE strategies for both players, so we must show that it terminates after a finite number of iterations, i.e. that the condition in line 7 of Algorithm~\ref{XDOalgorithm} is satisfied for some $\tau^* < \infty$. 
%To this end, we seek to show that the condition in line 7 of Algorithm~\ref{XDOalgorithm} is satisfied after some finite number of iterations $\tau' < \infty$. 
To do this, we compare the restricted NE payoff of player $i$, i.e., $\hat{U}^{i*}\triangleq U^i(\extstrategy^{*};S_0, \prior)$ to the best response payoff of player $i$, i.e., $U^i_{\bestresponse} \triangleq U^i(\bestresponse^i(\extstrategy^{-i*}),\extstrategy^{-i*};S_0, \prior)$, where player $-i$ plays a restricted game NE strategy $\extstrategy^{-i*}$, to show that $U^i_{\bestresponse} - \hat{U}^{i*} \leq \epsilon_2 - \epsilon_1$ for $\tau$ large enough. Throughout the proof, player $i$ denotes the player that is best responding to the other's restricted NE policy. 

%The population $\restrictedpolicyset_\tau$ corresponds to restricted action sets $\restrictedactions^i_\iter(I^i)$ defined in~\eqref{eq:restrictedActionSet} and a set of info states $I^i$ which can be reached using those actions. 
%$I^i_t$

We begin with some concepts and notation. 
%Because of the $\epsilon$-NE guarantee in the restricted game, in order for $U^i_{\bestresponse} - \hat{U}^{i*} \geq \epsilon$, the best response $\gamma_\bestresponse^i \triangleq\bestresponse^i(\hat{\gamma}^{-i*})$ in line 6 must \emph{add an action} (perform an action $a^i \notin \restrictedactions^i_\tau(I^i_t)$) to an info set $I^i_t$. 
%Info state $I^i_t$ corresponds to some game history $h_t$ of length $t$. 
%
For $\gamma_\bestresponse^i\triangleq\bestresponse^i(\extstrategy^{-i})$, let $\timebradds$ be the first time that $\gamma_\bestresponse^i$ \emph{adds an action} (performs an action not in the restricted game). More formally, 
%\begin{align}
    $\timebradds \triangleq \min \{ t \geq 0 | \exists I^i_t \in \hat{\mathcal{I}}^i , a^i \notin \restrictedactions^i_\tau(I^i_t) 
    \text{ s. t. } \gamma_\bestresponse^i(a^i | I^i_t) > 0 \}$,
%\end{align}
%\cap \mathcal{I}_t^i
%\in \mathcal{A}^i(I^i_t)\setminus 
and $\timebradds = \infty$ if the minimum does not exist. $\timebradds< \infty$ is necessary for $U^i_{\bestresponse} - \hat{U}^{i*} > \epsilon_2 - \epsilon_1$, because of the $\epsilon_1$-NE guarantee in the restricted game. 
%$a^i \in \restrictedactions^i_\tau(I^i_t)$ for all $I^i_t$ and all $t < \timebradds$, so that no action is added before $\timebradds$. 
%the lowest $t$ for which there exists an information state $I^i_t$ such that $\bestresponse^i(\gamma^{-i})$ adds an action. 
%We will show that the difference $U^i_{\bestresponse} - \hat{U}^{i*} \leq \delta(t')$, where $\delta(t') : \mathbb{Z}_{> 0} \rightarrow \mathbb{R}_{>0}$ and $\lim_{t\rightarrow\infty} \delta(t')=0$. 
%We will show that the difference $U^i_{\bestresponse} - \hat{U}^{i*} \leq \delta(t')$, where $\delta : \mathbb{Z}_{> 0} \rightarrow \mathbb{R}_{>0}$ and $\lim_{t\rightarrow\infty} \delta(t)=0$. 
%We will show that as $\timebradds \rightarrow \infty$, the probability of the game ending before then approaches $1$. 
%

Due to the significance of adding an action, we split player $i$'s utility function across $\timebradds$. 
%
%We begin finding the function $\delta(t')$ by splitting player $i$'s policy and the payoff function across time.
%We write a \emph{combination} strategy, player $i$ plays policy $\gamma^i_1$ until time $\timebradds$ and plays policy $\gamma^i_2$ after time $\timebradds$, as 
%\begin{align}
%    \combinedstrategy^i_{\timebradds}(I^i_t;\gamma^i_1,\gamma^i_2)\triangleq \begin{cases}
%        \gamma^i_1(I^i_t), & t < \timebradds \\
%        \gamma^i_2(I^i_t), & t \geq \timebradds
%    \end{cases}. 
%\end{align}
%With abuse of notation, we rewrite the utility for player $i$
%, by rewriting~\eqref{eq:gameCost}, 
%as 
%\begin{align}
%    U^i(\gamma^i,\gamma^{-i};S_0,\mathbf{p}) = \mathbb{E}
%    \left[\sum_{t=0}^{\infty} C^i(t)  \right], 
%\end{align}
%where $C^\red(t) \triangleq C(S_t,a_t^\blue)$ and $C^\blue(t) \triangleq -C^\red(t)$, for $t \leq T(\theta^\blue)$ and $C^i(t) = 0$ otherwise. For convenience of notation, this extends the cost to sum over all time by assigning costs of zero after the game has ended. Now 
Abusing notation, the utility can be written as  
    $U^i(\gamma^i,\extstrategy^{-i*};S_0,\prior) = U^i_{<\timebradds}(\gamma^{i}) + U^i_{\geq \timebradds}(\gamma^{i})$, where 
%\end{align}
%where $\gamma^i = \combinedstrategy^i_{t'}(\gamma^i_1,\gamma^i_2)$ and
%\begin{align}
%\begin{align}
    $U^i_{<\timebradds}(\gamma^{i}) \triangleq \mathbb{E}
    [\sum_{t=0}^{\timebradds-1} C^i(t) | \gamma^{i},\extstrategy^{-i*} ]$ and %\notag \\
     $U^i_{\geq \timebradds}(\gamma^{i}) \triangleq \mathbb{E}
    [\sum_{t=\timebradds}^{\infty} C^i(t) | \gamma^{i},\extstrategy^{-i*} ]$,  
%\end{align}
and $C^\red(t) \triangleq C(S_t,a_t^\blue), C^\blue(t) \triangleq -C^\red(t)$ for $t \leq T(\theta^\blue)$, and $C^i(t) = 0$ otherwise. For convenient notation, this extends the summation to infinite time by assigning 
stage costs of zero after the game ends. 
%because the expectation operator is linear. 
%Notice that while $U^i_{<\timebradds}$ depends only on the strategy used before $\timebradds$, $U^i_{\geq \timebradds}$ depends on the strategy for all time because $\gamma^{i}_1$ determines the distribution of the info state at time $\timebradds$. 

%This split across time allows us to compare the payoff of player $i$'s best response strategy $\gamma_\bestresponse^i$ to the payoff of a \emph{restricted} best response strategy $\hat{\gamma}_\bestresponse^i\triangleq\combinedstrategy_{t'}^i(\gamma_\bestresponse^i,\tilde{\gamma}^i)$. 
With this notation defined, we provide an overview of the remaining proof. 
For a given $\timebradds$, we can upper bound $U^i_{\bestresponse} - \hat{U}^{i*}$ by first showing that the expected cost is bounded for the relevant blue policies. Since the stage costs at even time steps are lower bounded by a positive constant, the probability of the game \emph{not} ending before $\timebradds$ can be upper bounded, limiting the effect of adding actions on $U^i_{\bestresponse} - \hat{U}^{i*}$. %This can be used to show 
Then we show that $U^i_{\bestresponse} - \hat{U}^{i*} \leq \epsilon_2 - \epsilon_1$ for $\timebradds$, and hence $\tau$, large enough. 
%Because the stage costs are lower bounded by a positive constant,  the costs are upper bounded for the relevant blue policies

Now, rather than comparing the best response $\gamma_\bestresponse^i$ directly to the restricted NE policy $\extstrategy^{i*}$, we define 
%a more convenient restricted game strategy. 
the \emph{restricted} best response strategy to be 
\begin{align}
    \hat{\gamma}_\bestresponse^i %\triangleq \combinedstrategy_{\timebradds}^i(\gamma_\bestresponse^i,\tilde{\gamma}^i), \\
    \triangleq \begin{cases}
        \gamma_\bestresponse^i(I^i_t), & t < \timebradds \\
        \tilde{\gamma}^i(I^i_t), & t \geq \timebradds
    \end{cases}, 
\end{align}
which plays the default strategy after $\timebradds$, adding no actions. 
%. The latter strategy is a version of the best response strategy modified to be a restricted game strategy. 
%Because $t'$ is the first time $\gamma_\bestresponse^i$ plays an action not present in the restricted game, from that time forward $\hat{\gamma}_\bestresponse^i$ plays the default strategy so that it is a valid policy in the restricted game. 
%, which is always included in the restricted game because of the restricted game's initialization at $\tau=0$. 
%We write the payoff of this strategy against player $-i$'s NE strategy in the restricted game as 
Its payoff is $\hat{U}^i \triangleq U^i(\hat{\gamma}_\bestresponse^i,\extstrategy^{-i*};S_0,\prior)$. 
Due to properties of the restricted game's $\epsilon_1$-NE and the best response, we have $\hat{U}^i \leq \hat{U}^{i*}+\epsilon_1 \leq U^i_{\bestresponse}$. 
Therefore, if we can show that $U^i_{\bestresponse} - \hat{U}^i \leq \epsilon_2 - 2\epsilon_1$, then $U^i_{\bestresponse} - \hat{U}^{i*} \leq \epsilon_2 - \epsilon_1$ is implied. 
%
%
%This is convenient 
Because $\hat{\gamma}_\bestresponse^i$ and $\gamma_\bestresponse^i$ are identical before time $\timebradds$, we have $U^i_{\bestresponse} - \hat{U}^i = U^i_{\geq \timebradds}(\gamma_\bestresponse^i) - U^i_{\geq \timebradds}(\hat{\gamma}_\bestresponse^i)$. 
%We rewrite the remaining payoff 
For each $\gamma^i$, the remaining payoff is 
\begin{align}\label{eq:remainingexpectedcost}
    U^i_{\geq \timebradds}(\gamma^{i}) &= \sum_{I \in \mathcal{I}_{\timebradds}} \prob(I_{\timebradds})U^i_{\geq \timebradds}(\gamma^{i} | I_{\timebradds}), 
\end{align}
where
   $ U^i_{\geq \timebradds}(\gamma^{i} | I_{\timebradds}) \triangleq \mathbb{E} \left[\sum_{t=\timebradds}^{\infty} C^i(t) \Bigm|\gamma^i, \extstrategy^{-i*}, I_{\timebradds} \right]$. %\notag 
    %\sum_{I \in \mathcal{I}_{t'}} \prob(I_t) \mathbb{E} \left[\sum_{t=t'}^{\infty} C^i(t) \Bigm|\gamma^i, \gamma^{-i*}, I_{t'} \right].  

%where $\mathcal{I}_{\geq t'}$ is the set of all info states corresponding to histories of at least length $t'$. 

%We wish to upper bound this quantity, and 

We first find bounds for $U^i_{\geq \timebradds}(\gamma^{i} | I_{\timebradds})$. %to show that it is upper and lower bounded for all $I_{\timebradds}$ under the relevant strategies. 
%Let $\overline{v}(p)$ denote the value of the optimization in~\eqref{eq:defaultstratoptblue}, and $\overline{v}\triangleq \max_p \overline{v}(p)$. 
Let 
%\begin{align}
    \[\maxdistance \triangleq \max_{p \in \nodeset,\bluetype\in \bluetypeset} \min_{p_\text{goal}\in \goalset(\bluetype)} \overline{d}(p,p_\text{goal}).\] 
%\end{align}
By the default strategy's construction, we have $J(\gamma^{\red},\defaultblue;S_0,\prior) \leq \maxdistance$, for all $\gamma^{\red},S_0,\prior$. 
Because the default strategy is always in the restricted game, 
it can be shown that $|U^i_{\geq \timebradds}(\gamma^{i} | I_{\timebradds})| \in [0,\overline{v}]$, 
for $\gamma^{i} = \gamma_\bestresponse^i,\hat{\gamma}_\bestresponse^i$ and $i=\red,\blue$, where $\overline{v} \triangleq 2\maxdistance$. 
As a result, for $i=\red,\blue$, we have 
%\begin{align}
   $ U^i_{\geq \timebradds}(\gamma_\bestresponse^i | I_{\timebradds}) - U^i_{\geq \timebradds}(\hat{\gamma}_\bestresponse^i | I_{\timebradds}) \leq \overline{v}$. 
    %&\mathbb{E}\left[\sum_{t=t'}^{\infty} C^i(t) \Bigm| \gamma_\bestresponse^i, \gamma^{-i*}, I_{t'} \right] \notag\\
    %&- \mathbb{E}\left[\sum_{t=t'}^{\infty} C^i(t) \Bigm| \hat{\gamma}_\bestresponse^i, \gamma^{-i*}, I_{t'} \right] \leq 2\overline{v}. 
%\end{align}
%
Together with~\eqref{eq:remainingexpectedcost}, this implies 
%\begin{align}
    $U^i_{\bestresponse} - \hat{U}^i \leq \sum_{I_{\timebradds} \in \mathcal{I}_{\timebradds}} \prob(I_{\timebradds})\overline{v}$. 
%\end{align}

Next, we upper bound $\sum_{I_{\timebradds} \in \mathcal{I}_{\timebradds}} \prob(I_{\timebradds})$, %\prob(T(\theta^\blue) \geq \timebradds) = 
which is the probability that the game does not end before $\timebradds$. %, under the policy $\gamma_\bestresponse^i$. %, as a function of $\timebradds$. 
%
%To lower bound the cost of a trajectory of length $t'$, 
Define the minimum stage cost at even times as  
%\begin{align}
\[C_\text{min} \triangleq \min_{k\in \naturalSet{K}; (p,p') \in \edgeset} w^k(p,p'). \]
%\end{align}
%Therefore, for any game history $h_t$ with length $t$, the cost associated with it is lower bounded $U(h_t) > tC_\text{min}$. 
%Because the game remains in info states of the restricted game before time $\timebradds$, 
Because Blue either plays the restricted game NE strategy or a best response, we have 
\[\sum_{I_{\timebradds} \in \mathcal{I}_{\timebradds}} \prob(I_{\timebradds})C_\text{min}\timebradds/2 \leq U^i_{<\timebradds}(\gamma_\bestresponse^i)\leq \overline{v}.\] 
%This holds for both $i = \blue,\red$ because, for $i = \red$, Blue plays the restricted game equilibrium strategy and, for $i = \blue$, Blue plays a best response. 
Therefore, 
%\begin{align}
    %\sum_{I_{t'} \in \mathcal{I}_{t'}} \prob(I_{t'})C_\text{min}t' &\leq \overline{v}\notag \\
    $\sum_{I_{\timebradds} \in \mathcal{I}_{\timebradds}} \prob(I_{\timebradds}) \leq \frac{2\overline{v}}{C_\text{min}\timebradds}$ for $\timebradds>0$. 
%\end{align}
%Finally, then, $U^i_{\bestresponse} - \hat{U}^i \leq \delta(t) \triangleq \frac{2\overline{v}^2}{C_\text{min}t'}$. 
%Putting this all together, we have 
%Putting the bound on the difference %$U^i_{\geq \timebradds}(\gamma_\bestresponse^i | I_{\timebradds}) - U^i_{\geq \timebradds}(\hat{\gamma}_\bestresponse^i | I_{\timebradds})$ 
%together with this bound on the probability
Putting this bound together with the previous bound on the difference, we have 
%\begin{align}
    $U^i_{\bestresponse} - \hat{U}^{i*} -\epsilon_1\leq U^i_{\bestresponse} - \hat{U}^i \leq \frac{2\overline{v}^2}{C_\text{min}\timebradds}$.% \triangleq \delta(\timebradds)$.  
%\end{align}
%where $\Delta(t') \triangleq \frac{\overline{v}^2}{C_\text{min}t'}$. 
%

Therefore, because $\epsilon_2>2\epsilon_1$, there exists a time $\timebradds^*$ such that $\frac{2\overline{v}^2}{C_\text{min}\timebradds} \leq \epsilon_2-2\epsilon_1$ and so $U^i_{\bestresponse} - \hat{U}^{i*} \leq \epsilon_2$. %In words, if no actions are added before time $\timebradds^*$, then the payoff of the best response cannot be greater than the payoff of the restricted game by more than $\epsilon_2$. 
%Because there are a finite number of actions in each info state, there are only a finite number of info states 
%of length $\timebradds^*$, and at least one action is added to the restricted game at each iteration $\tau$, 
Finally, then, it can be shown that 
there exists a finite $\tau^*$ such that there are no actions to be added before time $\timebradds^*$, and so the difference $U^i_{\bestresponse} - \hat{U}^{i*} \leq \epsilon_2$ and the algorithm terminates. 
\qed

{\scriptsize

    \bibliographystyle{ieeetr}
    \bibliography{references}

    }

\end{document}